\theoremstyle{plain}
\newtheorem{thm}{Theorem}[section]
\newtheorem{prop}[thm]{Proposition}
\theoremstyle{definition}
\newtheorem{defn}{Definition}[section]
\theoremstyle{remark}
\newtheorem{Remark}{\bf Remark}[section]
\theoremstyle{remark}
\newtheorem{com*}{\bf Comment}
\def \newequation#1#2{
 \@definecounter{#1}
 \@namedef{the#1}{\hbox{#2}}
 \@namedef{#1}{$$\refstepcounter{#1}}
 \@namedef{end#1}{
    \eqno \csname the#1\endcsname $$\global\@ignoretrue
    }
}
\newcommand{\E}{\mathds E}
\newcommand{\R}{\mathbb R}
\title{Strong Solutions and Quantization-Based Numerical Schemes for a Class of Non-Markovian Volatility Models}
\author{Martino Grasselli \thanks{Department of Mathematics ``Tullio Levi Civita'',
University of Padova, via Trieste 63, 35121 Padova, Italy, and De Vinci Higher Education, De Vinci Research Center, Paris, France. Email: grassell@math.unipd.it}
\and
Gilles Pag\`es\thanks{Sorbonne Universit\'e, Laboratoire de Probabilit\'e, Statistique et Mod\'elisation, Campus Pierre et Marie Curie, case
158, 4, pl. Jussieu, F-75252 Paris Cedex 5, France. Email: gilles.pages@upmc.fr. Acknowledgments: We are grateful to Abass Sagna for providing his support in the numerical illustration and we thank Julien Guyon, Benjamin Jourdain and Eckhard Platen for useful comments.
}
}
\begin{document}
\maketitle
\begin{abstract}
We investigate a class of non-Markovian processes that hold particular relevance in the realm of mathematical finance. This family encompasses path-dependent volatility models, including those pioneered by \cite{platrendek18} and, more recently, by \cite{GuyonVolMostlyPathDep2022}.
Our study unfolds in two principal phases. In the first phase, we introduce a functional quantization scheme based on an extended version of the Lamperti transformation that we propose to handle the presence of a memory term incorporated into the diffusion coefficient. 
In the second phase, we study the problem of existence and uniqueness of a strong solution for the SDEs related to the examples that motivate our study, in order to provide a theoretical basis to correctly apply the proposed numerical schemes. \end{abstract}

{\bf 2010 Mathematics Subject Classification}. 60F10, 91G99, 91B25.

{\bf Keywords}: Strong solutions, non Markovian SDEs, Functional quantization, Lamperti transform.

\section{Introduction}

In this paper, we investigate a family of non-Markovian processes that include some  models recently introduced in mathematical finance, including e.g.  \cite{platrendek18, GuyonVolMostlyPathDep2022}. These models share two key characteristics that distinguish them from traditional approaches: first, they represent a sophisticated alternative to the rough volatility frameworks, introduced by \cite{ElEuch2018}, \cite{rough2018} and subsequently elaborated upon by numerous authors in the field. While retaining the capacity to capture complex market dynamics (typically the S\&P and the VIX), they offer a distinct approach to modeling volatility, diverging from the fractional Brownian motion paradigm that characterizes rough volatility models. Second, these models  are parsimonious, requiring only one Brownian motion to describe the dynamics of both the underlying asset and its volatility process. This contrasts with typical stochastic volatility models such as \cite{Heston93}, SABR \cite{sabr2002} or Bergomi (\cite{bergomi1}) which often require the introduction of  multiple Brownian motions.

Let us begin with the problem at hand: given a probability space $(\Omega, \mathcal{F}, (\mathcal{F}_t)_{t\geq0}, \mathbb{P})$, consider the following (non-Markovian) stochastic process for $t\geq 0$:
\begin{equation}\label{initialprocess}
Y_t = y_0 + \int_0^t b\left(s,Y_s,\int_0^s g_1(u,Y_u)du,\int_0^s g_2(u,Y_u)dW_u\right)ds + \int_0^t 
a\left(s,Y_s,\int_0^s h(u,Y_u)du\right)dW_s,
\end{equation}
where $y_0\in\mathbb{R}$, $W$ is a $(\mathbb{P}, {\cal F})_t$-Brownian motion, independent of ${\cal F}_0$. We assume  that $Y_0= y_0$ is deterministic  for convenience and to alleviate the quantization procedure, but everything can be adapted  to the case where $Y_0$ is a random variable independent of $W$ (it suffices to add a vector quantization phase of $Y_0$ to the functional quantization of the SDE). For now, we maintain general conditions on the functions $b,g_1,g_2,a,h$, requiring only the existence of at least a weak solution to~\eqref{initialprocess} and assuming (uniform) ellipticity for the diffusion coefficient $a$, namely $a\left(s,y_s,\int_0^s h(u,y_u)du\right) \geq \epsilon_0$  for every $y\in {\cal C}_{[0,T]}(\mathbb{R}_+,\mathbb{R})$ \textcolor{black}{and} for some $\epsilon_0 > 0$, along with differentiability of $a$ with respect to both arguments.
\textcolor{black}{First notice that the process $Y$ is indeed non-Markovian, but it becomes Markovian if we add the three "companion" integrals containing $g_1, g_2$, and $h$ to it, so it is a non-Markovian process which admits a Markov representation up to an increase of the dimension of the state space (here four)}.
We shall prove that the family of processes we consider allows  for the application of efficient discretization techniques, such as product functional quantization, see  \cite{lushpages2002}, \cite{lushpages2002b}, \cite{pagesprintem05}, \cite{LusPag2023}). In fact,  for such  non-Markovian stochastic processes it is natural to call upon  functional quantization to directly approximate entire trajectories. This approach contrasts with the recursive marginal quantization approach, which only works with simulable discrete time Markov processes, inducing here an increase of the dimension, combined with the introduction of a time discretization scheme (Euler or other). 
We will show  that, thanks to an extension of the Lamperti transform, it is possible to reduce the general dynamics to very simple forms, consisting of a simple Brownian motion plus a drift. This reduction allows for the adaptation of highly effective functional quantization techniques (see  \cite{pagesprintem05}, \cite{LusPages06}, \cite{LusPag2023}).

We begin with an observation that will be of crucial importance in the following part of the paper.
From a functional quantization perspective, one might naively suggest the following approximation for the process~\eqref{initialprocess}: 
\begin{equation*}
d\widehat{Y}_t = b\left(t,\widehat{Y}_t,\int_0^t g_1(u,\widehat{Y}_u)du,\int_0^t g_2(u,\widehat{Y}_u)d\widehat W_u\right)dt + a\left(t,\widehat{Y}_t,\int_0^t h(u,\widehat{Y}_u)du\right)d\widehat{W}_t,
\end{equation*}
where $\widehat{W}_t$ denotes a functional quantization of the Brownian motion $W$. However, this approximation proves to be misleading as it neglects the presence of correction terms arising from the application of the Lamperti transformation, as we shall illustrate
 in the following (Markovian)  example.
Consider the  Brownian diffusion  
\begin{equation*}dY_t =b(Y_t)dt + a(Y_t)dW_t, \quad Y_0=y_0\in\mathbb{R},\quad t\in[0,T]\end{equation*}
  and assume that it admits at least one weak solution (for example, assume sub-linear growth of the coefficients: $\vert b(y)\vert +\vert a(y)\vert \leq (1+\vert y\vert ) \ \forall y\in \mathbb{R}$).
Moreover, assume differentiability and  (uniform) ellipticity for the diffusion coefficient $a$.
Let us now  introduce a new diffusion $X_t:=S(Y_t)$, where $S(y)=\int_0^y \frac{d\xi}{a(\xi)}$ is the Lamperti transform, so that the process $X$  will satisfy a new SDE whose diffusion
coefficient will be the constant equal to $1$, namely
\begin{equation*}
dX_t= \beta(X_t)dt+dW_t,\end{equation*}
with $\beta= \left( \frac{b}{a}-\frac{1}{2}a'\right)$.
Now, the strict positivity and differentiability of the  coefficient $a $ implies that the function $y\rightarrow S(y)$  is continuous and strictly increasing, so that its  inverse is also differentiable and satisfies the ODE: $\left( S^{-1}\right)'=a (S^{-1})$, namely, the Lamperti transform is a diffeomorphism. 
This provides us with a recipe: to quantize the SDE satisfied by $Y$, one should first formally quantize its Lampertized version $X$ (with a diffusion coefficient equal to one) by following the above naive approach since it involves solely the Brownian motion. Then, the corresponding ODE satisfied by the functional codewords of the quantizer for $X$ can be anti-transformed by taking the inverse of the Lamperti transform and we obtain the new ODE satisfied by the (functional codewords of the quantizers for the) initial process $Y$, which turns out to be:
\begin{equation*}
dy^i_t = (b(y^i_t) - \frac{1}{2}aa'(y^i_t) + a(y^i_t)\alpha_i'(t))dt,   \quad i=1,\dots, N,
\end{equation*}
with $y^i_0=Y_0$ if $Y_0$ is deterministic, and where $\alpha=(\alpha_1,\dots, \alpha_N)$ is an $N-$quantizer of the Brownian motion $W$, namely,  $\alpha_i: [0,T] \rightarrow \mathbb{R}$ denotes a codeword of a quantizer at level $N$ (that is, using at most $N$ elementary quantizers) for the Brownian motion, 
sharing some optimality properties to be specified later on.
Let us  underscore that the presence of the correction term $-\tfrac 12 aa'$, which was missing in the above naive approach,  is  similar to that obtained with regular diffusion, where the stochastic integration is taken in the Stratonovich sense, see  \textcolor{black}{\cite{Doss1977} } and \cite{PagSellami2011}, where they also make a 
 connection with rough path theory to show that the solutions of the quantized solutions of the ODE converge toward the solution of the SDE. 
It turns out that the resulting quantizer $\widehat Y$
is a non-Voronoi quantizer (since it is defined on the explicit  Voronoi diagram of $W$), but it is nevertheless rate-optimal, at least in the setting of regular (Markovian) Brownian diffusions, see \cite{LusPages06} and \cite{LusPag2023} for more details.

Quite remarkably, it turns out that it is possible to apply the whole procedure even in the presence of a memory term in the diffusion coefficient of~\eqref{initialprocess}, which represents a non trivial extension of the results in \cite{LusPages06} and \cite{LusPag2023} and constitutes  the main contribution of the paper from a functional quantization perspective.
We will therefore proceed in successive steps. First, we will provide some background on the functional quantization for the Brownian motion. This initial step of the procedure is standard and is reported for the sake of completeness in the following subsection. Then, we will apply the Lamperti transform to the diffusion, and we obtain another diffusion where the coefficient in front of the Brownian motion is equal to one. To do this, we will apply the product functional quantization, and then the Lamperti inverse transform to find the ODE satisfied by the functional codewords of the quantizer for the original process in~\eqref{initialprocess}.

For illustrative purposes, we consider the model by \cite{GuyonVolMostlyPathDep2022}, where the diffusion coefficient does not contain memory terms, resulting in a particularly simple Lamperti transform, and the model by \cite{platrendek18}, where the Lamperti transform requires an extension due to the presence of an integral term in the diffusion coefficient.

The primary challenge in studying these models lies in  the analysis of existence and uniqueness of a strong solution to the associated stochastic differential equations, in order to correctly apply the whole procedure.  
 Recent literature has focused on the model proposed by \cite{GuyonVolMostlyPathDep2022}, with varying conclusions depending on the methodologies employed to prove the existence and uniqueness of strong solutions.
Notably, \cite{NutzValde2023} demonstrated the existence and uniqueness of a strong solution for the \cite{GuyonVolMostlyPathDep2022} model.  \cite{AndresJourdain2024}  proves the existence of strong solutions in a more general context, considering kernels that encompass the possibility of rough volatility models, by allowing power-type kernels in addition to the negative exponential kernels in the original \cite{GuyonVolMostlyPathDep2022} model.
%

In our work, we employ a more direct approach utilizing a classic existence-uniqueness result for \textit{path-dependent SDEs} from \cite{rogerswill2000} (Theorem 11.2, p128), which proves to be an extremely powerful and versatile tool for demonstrating the existence and uniqueness of strong solutions in our context. These results represent an independent contribution of the paper.
Our proofs are simpler compared to those in \cite{AndresJourdain2024}, even in the case of exponential kernels employed in the original version of 
\cite{GuyonVolMostlyPathDep2022}. On the other hand, rather than generalizing the kernel type, we consider more general dynamics that can include the aforementioned models with their respective properties: parsimony (a single Brownian motion in the process dynamics) and simplicity (negative exponential kernels, thus excluding power functions that would lead to additional technical difficulties, as they would result in non-Markovian stochastic Volterra equations).

The paper is structured as follows: Section \ref{section2} presents illustrative examples that underscore the significance of investigating the class of processes defined by Equation \ref{initialprocess}. Section \ref{section3} introduces the concept of product functional quantization, commencing with the Karhunen-Lo\`{e}ve expansion for Brownian motion. We then  extend  the classical Lamperti transform method, in order to reduce the original diffusion to a purely Brownian motion with an additional drift component. This extension is necessitated by the presence of an extra memory term in the diffusion coefficient of the process defined by Equation~\eqref{initialprocess}, allowing for the direct application of product functional quantization to transform the stochastic differential equation (SDE) into an ordinary differential equation (ODE).
Section \ref{section4} examines the model proposed by \cite{GuyonVolMostlyPathDep2022}. We first establish the existence and uniqueness of a strong solution for the corresponding path-dependent SDE and we derive the ODE arising from the classical Lamperti transform. In Section \ref{section5}, we apply an analogous procedure to the model presented in \cite{platrendek18}. This model requires additional attention due to the presence of a memory term in the diffusion coefficient, necessitating the extended Lamperti transform introduced in Section \ref{section3}. 
Section \ref{section7} concludes, with few provisional remarks on further developments, as numerical accelerating procedures.

\section{Motivating examples}\label{section2}

\subsection{Path-Dependent Volatility Models}

Recent years have witnessed the introduction of several models postulating that volatility is a function of the past trajectory of the underlying asset. The fundamental premise of these models is encapsulated in the asset dynamics:
\begin{equation*}
\frac{dS_t}{S_t} = \sigma(S_u, u \leq t) dW_t,
\end{equation*}
where the volatility $\sigma$ is assumed to be a functional of past asset returns and past squared returns, mediated through convolutional kernels.

To formalize this concept, we define two integral quantities:
\begin{align*}
R_{1,t} &:= \int_{-\infty}^t K_1(t-u) \frac{dS_u}{S_u} = \int_{-\infty}^t K_1(t-u) \sigma_u dW_u, \\
R_{2,t} &:= \int_{-\infty}^t K_2(t-u) \left(\frac{dS_u}{S_u}\right)^2 = \int_{-\infty}^t K_2(t-u) \sigma^2_u du,
\end{align*}
where $K_1$ and $K_2$ are convolution kernels  of exponential type\footnote{The choice of kernel type significantly influences the nature of the resulting volatility. While power-type kernels can generate rough volatilities, which are non-Markovian, empirical findings   suggest, according to \cite{GuyonVolMostlyPathDep2022},  that exponential kernels are more appropriate in practice.} that decay to zero:
\begin{equation*}
K_i(\tau) = \lambda_i e^{-\lambda_i \tau}, \quad \lambda_i > 0, \quad i = 1, 2.
\end{equation*}
This formulation leads to Markovian dynamics for $(R_{1,t}, R_{2,t})$:
\begin{align}
dR_{1,t}&= \lambda_1 \left(\sigma(R_{1,t},R_{2,t})dW_t-R_{1,t}dt\right)\label{R1dyn}\\
dR_{2,t}&= \lambda_2 \left(\sigma(R_{1,t},R_{2,t})^2-R_{2,t}\right)dt,\label{R2dyn}
\end{align}
which can be equivalently expressed as:
\begin{align}
R_{1,t}&=\int_{-\infty}^t \lambda_1e^{-\lambda_1 (t-u)}\sigma_udW_u\ = R_{1,0}e^{-\lambda_1 t}+\lambda_1\int_0^t e^{-\lambda_1 (t-u)}\sigma_udW_u\label{R1t},\\
R_{2,t}&= \int_{-\infty}^t \lambda_2e^{-\lambda_2 (t-u)}\sigma^2_udu = R_{2,0}e^{-\lambda_2 t}+ \lambda_2\int_0^t e^{-\lambda_2 (t-u)}\sigma^2_udu.\label{R2t}
\end{align}

\begin{Remark}
The incorporation of memory into volatility modeling has already been considered in the literature. Notable examples include:

\begin{itemize}
\item QARCH (Quadratic ARCH) model \cite{Sentana1995}:
\begin{equation*}
\sigma^2_t = \beta_0 + \beta_1 R_{1,t} + \beta_2 R_{2,t}.
\end{equation*}
\item Hawks process with Zumbach effect \cite{BlancDonierBouchaud2017, Zumbach2010}:
\begin{equation}
\sigma^2_t = \beta_0 + \beta_1 R^2_{1,t} + \beta_2 R_{2,t}\label{blanc}.
\end{equation}
\item Quadratic rough Heston \cite{Gat_Jus_Rosen2020}:
\begin{equation*}
\sigma^2_t = \beta_0 + \beta_1(R_{1,t} - \beta_2)^2.
\end{equation*}
\item Guyon and Lekeufack model \cite{GuyonVolMostlyPathDep2022}:
\begin{equation}
\sigma_t = \sigma(R_{1,t}, R_{2,t}) = \beta_0 + \beta_1 R_{1,t} + \beta_2 \sqrt{R_{2,t}}.\label{volJulien}
\end{equation}
\end{itemize}
\end{Remark}

These models can capture various stylized facts of financial markets. For instance, the leverage effect -- the negative correlation between returns and volatility -- can be accommodated in the models of \cite{Sentana1995} and \cite{GuyonVolMostlyPathDep2022} by setting $\beta_1 < 0$, or in \cite{Gat_Jus_Rosen2020} by ensuring $\beta_1 \beta_2 > 0$. Notably, these path-dependent volatility models do not require an independent source of noise, as volatility is endogenously generated by asset returns.

The dynamics of $R_{1,t}$ and $R_{2,t}$ are fully specified once the functional form of the volatility is determined. For example, the model of \cite{GuyonVolMostlyPathDep2022}  in~\eqref{volJulien} yields the following volatility dynamics:
\begin{equation*}
d\sigma_t = \left(-\beta_1 \lambda_1 R_{1,t} + \frac{\beta_2 \lambda_2}{2} \frac{\sigma_t^2 - R_{2,t}}{\sqrt{R_{2,t}}}\right) dt + \beta_1 \lambda_1 \sigma_t dW_t
\end{equation*}
Setting $\sigma_t = Y_t$ for all $ t\in[0,T]$, we can identify this with the general form:
\begin{equation*}
Y_t = y_0 + \int_0^t b\left(s, Y_s, \int_0^s g_1(u, Y_u) du, \int_0^s g_2(u, Y_u) dW_u\right) ds + \int_0^t a\left(Y_s, \int_0^s h(u, Y_u) du\right) dW_s,
\end{equation*}
where, for all $y\in{\cal C}([0,T],\mathbb{R})$,
\begin{align*}
\textcolor{black}{g_1(t,y_t) = e^{\lambda_2 t}y^2_t,\quad    g_2(t,y_t) = e^{\lambda_1 t}y_t}, \quad 
h(t,y_t) = 0,
\end{align*}
\begin{align}
b\Big(t,y_t,\int_0^t g_1(u,y_u)du,\int_0^t g_2(u,y_u)dW_u\Big)&=  -\beta_1 \lambda_1 e^{-\lambda_1 t}\left( R_{1,0}+\lambda_1\int_0^t \textcolor{black}{g_2(u,y_u)} dW_u\right)\nonumber\\
&+
\frac{\beta_2\lambda_2}{2}\frac{y_t^2-e^{-\lambda_2 t}\left( R_{2,0}+ \lambda_2\int_0^t \textcolor{black}{g_1(u,y_u)}\textcolor{black}{du}\right)}{\sqrt{e^{-\lambda_2 t}\left( R_{2,0}+ \lambda_2\int_0^t \textcolor{black}{g_1(u,y_u)}\textcolor{black}{du}\right)}}\label{bJulien}\\
a\Big(t,y_t,\int_0^t h(u,y_u)du\Big)&= \beta_1\lambda_1 y_t\label{aJulien}.
\end{align}


\subsection{The Model of \cite{platrendek18}}

\textcolor{black}{We first briefly recall the pricing framework based on the Benchmark Approach introduced in \cite{bookplaten10}. Within this theory, the best-performing strictly positive portfolio is selected as the natural benchmark for asset allocation and, simultaneously, as the natural num\'eraire for pricing. This portfolio, known as the Growth Optimal Portfolio (GOP), maximizes the expected growth rate or, equivalently, the expected logarithmic utility. Moreover, it coincides with the num\'eraire portfolio (NP), in the sense that any nonnegative self-financing portfolio, when expressed in units of the NP, makes up a supermartingale.
This key property gives rise to a natural pricing rule under the real-world probability measure, yielding the minimal replicating price. Remarkably, under the benchmark approach, the existence of an equivalent risk-neutral probability measure is not required. This feature gives to the modeler  considerably greater flexibility compared to the traditional risk-neutral framework.
When taking the GOP as num\'eraire, the price at time $t$ of a contingent claim delivering a payoff 
$\Psi_T$ at maturity $T$ is given by
\begin{eqnarray}
Price_t(\Psi_T)&=& S_t \mathbb E_t^{\mathbb{P}} \left[\frac{\Psi_T}{S_T} \right].\label{benchmarkprice}
\end{eqnarray}
It should be noted that, contrary to what happens with risk-neutral pricing, when using the benchmark approach, the price of a zero-coupon bond (ZCB) involves a non-trivial average due to the presence of the inverse of the GOP in the pricing formula,  even in the case where the interest rate is constant.  Indeed, the price at time $t\leq T$ of a zero-coupon bond maturing at $T$ is expressed as the conditional expected value, under the real-world measure $\mathbb{P}$, of the unit payoff divided by the GOP, namely
\begin{eqnarray}
Price_t(ZCB_T)&=& S_t \mathbb E_t^{\mathbb{P}} \left[S^{-1}_T \right].\label{benchmarkZCB}\end{eqnarray}
The first step  consists in specifying the dynamics of the growth optimal portfolio. In practice, a well-diversified total return index serves as a suitable proxy for the GOP. To make the model directly applicable and comparable with existing index models, \cite{platrendek18} adopt the dynamics of the S\&P 500 index, which represents one of the best-studied and most diversified equity benchmarks. By modeling the market activity as a linear function of the square of the derivative of the moving average of a proxy for the underlying Brownian motion, they obtain the model that we apply here to the S\&P 500.
}

Consider the following stochastic differential equation:
\begin{equation}
\frac{dS_t}{S_t} =  r_t dt + \sqrt{X_t} (\sqrt{X_t} dt + dW_t) , \quad S_0 > 0,\label{modelPlaten}
\end{equation}
where $X_t = M_t / Y_t$, and the processes $Y_t$ and $M_t$ are defined as follows:
\begin{align}
  dY_t  &= (\alpha- \beta Y_t) M_t dt  + \sigma \sqrt{M_t Y_t}  dW_t, \qquad Y_0=y_0>0,\label{Platen1}\\
M_t & = \xi (\lambda^2 (2 \sqrt{Y_t}-Z_t)^2 + \eta  )\nonumber\\
 Z_t &= 2 \lambda \int_0^t e^{-\lambda(t-s)} \sqrt{Y_s} ds.\nonumber
\end{align}

Here, we assume $\alpha \geq \sigma^2/2$ (a Feller-like condition, see Section \ref{section5}), $\eta, \xi, \lambda \geq 0$, and $\beta > 0$.
Equation~\eqref{Platen1} represents a slight and natural extension of the evolution of the inverse of the volatility for  the Growth Optimal Portfolio (GOP) $S$ introduced by \cite{platrendek18}, which originally corresponds to:
\begin{equation}
dY_t = (1 - Y_t) M_t dt + \sqrt{M_t Y_t} dW_t, \quad Y_0 = y_0 > 0. \label{Platen2}
\end{equation}
Intuitively, the process $X$ describes the market price of risk and is related to the volatility of the GOP, in accordance with the Benchmark Approach. The process $M$ is associated with market activity and is assumed to be a function of the volatility factor $Y$ itself.
In the special case where $M$ is constant, the GOP dynamics correspond to the so-called "3/2" stochastic volatility model, where the volatility factor is the inverse of a square root process $Y$. However, a crucial distinction from the standard "3/2" model is the presence of only one Brownian motion, implying perfect correlation between the volatility and the noise driving the asset price.
Notably, the market activity process $M$ incorporates the past trajectory of the volatility factor $Y$ through its regular integral $Z$, rendering the framework non-Markovian.

From the general model presented in equation~\eqref{initialprocess}, we can identify the following components: $g_2 = 0$ and, for all $y\in{\cal C}([0,T],\mathbb{R})$,
{\small 
\begin{align}
g_1(t,y_t) = h(t,y_t) = \sqrt{y_t} e^{\lambda t}, \label{hPlaten}
\end{align}
\begin{align}
b\left(t,y_t,\int_0^t g_1(u,y_u)du,\int_0^t g_2(u,y_u)dW_u\right) &= \xi (\alpha - \beta y_t) \left[ 4\lambda^2 \left( \sqrt{y_t} - \lambda e^{-\lambda t} \int_0^t g_1(u,y_u)du \right)^2 + \eta \right], \label{bPlaten}\\
a\left(t,y_t, \int_0^t h(u,y_u)du\right) &= \sigma \sqrt{\xi y_t \left[ 4\lambda^2 \left( \sqrt{y_t} - \lambda e^{-\lambda t} \int_0^t h(u,y_u)du \right)^2 + \eta \right]}.\label{aPlaten}
\end{align}
}
Note that, in contrast to equation~\eqref{aJulien}, the diffusion coefficient in this model includes a (locally) deterministic integral term.

\section{Functional Quantization via Lamperti Transform}\label{section3}

We initiate our analysis by employing the functional quantization approach, a natural extension of optimal vector quantization for random vectors to stochastic processes. This methodology, extensively studied since the late 1940s in signal processing and information theory, aims at providing  an optimal spatial discretization of a random $\mathbb{R}^d-$valued  signal $X$ with distribution $\mathbb{P}_X$ by a random vector taking at most $N$ values $x_1,\ldots,x_N$, termed as \textit{elementary quantizers} (or codewords).
Then, instead of transmitting the complete signal $X(\omega)$ itself, one first selects the closest $x_i$ in the quantizer set and transmits its (binary or Gray coded) label $i$. After reception, a proxy $\widehat X(\omega)$ of $X(\omega)$ is reconstructed using the codebook correspondence $i\rightarrow x_i$.

For a given $N \in \mathbb{N}$, an $N$-tuple of elementary quantizers $(x_1,\ldots,x_N)$ is $L^r-$optimal, $r>0$, if it minimizes over $(\mathbb{R}^d)^N$ the $L^r-$mean  quantization error:
\begin{equation*}
\|X-\hat{X}\|_r = \min_{(y_1,\ldots,y_N) \in (\mathbb{R}^d)^N} \mathbb{E}\left[\min_{1 \leq i \leq N} \|X - y_i\|^r\right]^{1/r}
\end{equation*}
induced by replacing $X$ by $\hat X$.
Typically $r$ is fixed to be equal to 2, leading to a quadratic quantization error. In $d$ dimensions and if $X\in L^r(\mathbb{R}^d)$, the minimal $L^r-$quantization error converges to zero at a rate of $N^{-\frac{1}{d}}$ as $N \to \infty$, according to the so-called Zador
theorem, \textcolor{black}{see \cite{graflush00}}.
Several stochastic optimization procedures based on simulation have been developed to compute these optimal quantizers (see, among others, \cite{pagesprintem03}). For a comprehensive exposition of mathematical aspects of quantization in finite dimensions, we refer to \cite{graflush00} and \cite{LusPag2023}.
In the early 1990s, the field of Numerical Probability witnessed the introduction of optimal quantization to devising quadrature integration formulae with respect to the distribution $\mathbb{P}_X$ on $\mathbb{R}^d$. This method leverages the principle that $\mathbb{E}[ F(X)] \approx \mathbb{E} [F(\widehat X)]$ when $N$ is sufficiently large. 
Subsequently, optimal quantization found applications in the development of tree methods, aimed at solving multi-dimensional non-linear problems that involve the computation of numerous conditional expectations. These applications span diverse areas in computational finance, including American option pricing, non-linear filtering for stochastic volatility models, and portfolio optimization. \textcolor{black}{For a (non-exhaustive) review of its applications in computational finance, one may refer e.g. to \cite{pagesphamprint2004}, \cite{corlaypages15}, \cite{PagSagna15}, \cite{AmerQuan1}, \cite{CalFioGraRisk2}, \cite{BarrQuanSagna17}, \cite{FiorinSagna}, \cite{CalFioGra2019}, \cite{CalFioGra2}.}

Recent extensions of optimal quantization to stochastic processes, viewed as random variables taking values in their path-space, have led to significant theoretical developments (see \cite{pages2000}, \cite{Fehringer2001}, \cite{lushpages2002}, \cite{lushpages2002b}, \cite{Dereich2008a}, \cite{Dereich2008b}, \cite{pagesprintem05}, \cite{LusPag2023}). For Gaussian processes, this functional quantization can be interpreted as a discretization of the path-space, typically the Hilbert space $L^2_T:=L^2_{\mathbb{R}}([0,T],dt)$, endowed with the norm defined by $\vert f\vert_{L^2_T}=(\int_0^T f^2(t)dt)^{1/2}$. In the quadratic case,
the upper bound for  the rate of convergence of the quantization error is derived by some Hilbertian optimization methods and
the lower bound using a connection with Shannon entropy. Under some regular variation
assumptions on the ordered eigenvalues of the covariance operator of the process, the
asymptotic rates of these lower and upper bounds coincide and hence provide the exact
rate. For some  Gaussian processes, like  as example  Fractional Brownian motion and Ornstein-Uhlenbeck processes, it turns out that the rate of convergence of the quantization error is $O((\log n)^{-\mu})$, where $\mu$ is the H{\"o}lder regularity of the map $t\rightarrow X_t$ from $[0,T]$ into $L^2(\Omega, \mathbb{P})$.  In particular, for the Brownian motion $W$, for which the Karhunen-Lo\`{e}ve eigenbasis of the covariance operator is explicitly known, this result can be refined into a sharp rate of convergence $c_W (\log n)^{-1/2}$, with an explicit real constant $c_W$ that is close to the optimal one, \textcolor{black}{see \cite{LusPag2023}}. This approach can be applied to the computation of the expectation $\mathbb{E}[F (X)]$
where $X$ is a Brownian diffusion (with explicit coefficients) and $F$ is an additive (integral) functional defined on $L^2_T$ by $\xi\rightarrow F(\xi) := \int_0^T f (t, \xi(t)) dt$. Then, the quadrature formulae involving these $N-$quantizers make up an efficient
deterministic alternative to Monte Carlo simulation for the computation of $\mathbb{E}[ F (X)]$.

\subsection{Step 1: Functional Product Quantization of the Brownian Motion}\label{subsec:FPQW}

The two main families of rate optimal quantizers of the Brownian
motion are the product optimal quantizers and the true optimal quantizers: both of them are  based on the Karhunen-Lo\`{e}ve expansion, which is explicit for the Brownian motion (as well as for other Gaussian processes like e.g. the Bronwian bridge or the O-U processes).
The true optimal functional quantization, see \cite{LusPag2023}, adopts a genuinely global approach to the approximation of the underlying stochastic process and achieves slightly better performance than componentwise (optimal) product functional quantizers. However, this gain comes at the cost of a substantially higher computational burden, which may render the method less tractable in practice. As a result, product functional quantization--based on a component-by-component construction--provides an effective and well-balanced compromise between accuracy and computational feasibility for practical applications.
We begin by performing a functional quantization for the Brownian motion $W$, which we identify with $\chi^{(N)}$, where $N$ denotes the number of trajectories. For an extensive treatment of the optimal quantization of Gaussian processes, including closed-form expressions for the representation of Brownian motion in terms of an explicit Karhunen-Lo\`{e}ve basis, we refer to the works of \cite{pagesprintem03}, \cite{pagesprintem05} and \cite{LusPag2023}. Notably, \cite{pagesprintem05} employ these expansions to price path-dependent derivatives based on assets following a stochastic Heston volatility model.

\begin{Remark}
The functional quantization projects the Brownian motion into a finite-dimensional space of functions for which the stochastic integral can be defined in the usual Lebesgue-Stieltjes  sense, allowing for pathwise reasoning for each trajectory up to an arbitrary (fixed) time horizon $T$.
\end{Remark}

For a fixed trajectory $\omega \in \Omega$, the Brownian motion $W$ can be represented as:
\begin{equation}
W_t(\omega) = \sum_{\ell \geq 1} \sqrt{\lambda_{\ell}} \xi_{\ell}(\omega) e_{\ell}(t),\label{KL expansion}
\end{equation}
where, for every $\ell\geq 1$,
\begin{equation*}
\lambda_{\ell} = \left(\frac{T}{\pi(\ell - \frac{1}{2})}\right)^2,  \quad 
e_{\ell}(t) = \sqrt{\frac{2}{T}} \sin\left(\pi\left(\ell - \frac{1}{2}\right)\frac{t}{T}\right) = \sqrt{\frac{2}{T}} \sin\left(\frac{t}{\sqrt{\lambda_{\ell}}}\right),
\end{equation*}
and
\begin{equation*}
\xi_{\ell} =\sqrt{\frac{2}{T}}
\int_0^T W_t \sin \left(\frac{t}{\sqrt{\lambda_{\ell}}}\right)\frac{dt}{\sqrt{\lambda_{\ell}}}
\end{equation*}
are i.i.d., $N (0; 1)-$distributed random variables. 
Here, the sequence $(e_{\ell})_{\ell\geq 1}$ is an orthonormal basis of $L^2_T$ and the system $(\lambda_\ell, e_{\ell})_{\ell\geq 1}$ can be characterized as the eigensystem of the (symmetric positive trace class) covariance operator of $ f\rightarrow (t \rightarrow \int_0^T
 (s \wedge t) f (s)ds) $.  The Gaussian sequence $(\xi_{\ell})_{\ell \geq 1}$ is pairwise uncorrelated, namely it is an orthogonal standard Gaussian basis carrying the randomness. Intuitively, the Karhunen-Lo\`{e}ve expansion
of $W$ plays the role of PCA of the process: it is the fastest way to exhaust
the variance of $W$ among all expansions on an orthonormal basis, and it 
combines both orthonormality of the  basis $(e_\ell )_{\ell \geq 1}$ and the mutual independence of its coordinates $\xi_\ell$.

The \textit{product functional quantization} of the Brownian motion $\widehat{W}$, using at most $N$ elementary quantizers, is obtained as follows: for every $\ell\geq 1$, 
let $\Gamma^\ell=\{x_i^{(N_\ell)},i=1\dots,N_\ell\}$ some quadratic (or $L^2$-) optimal \textcolor{black}{quantization grids at level $N_\ell$ of the ${\cal N}(0,1)$ distribution}\footnote{It is possible to find explicitly the optimal $N_{\ell}-$quantizers for any $\ell$ (see \cite{pagesprintem05} for further details). In fact, these quantizers and their weights are available and can be downloaded freely on the website $www.quantize.maths-fi.com$.}, and let $\widehat\xi_\ell^{(N_\ell)}=\hbox{Proj}_{\Gamma^\ell}(\xi_\ell)$, where $\hbox{Proj}_{\Gamma^\ell}$ is the nearest neighbor projection from $\mathbb{R}^d$ to $\Gamma^\ell$. Then
\begin{equation*}
\widehat W_t(\omega) = \sum_{\ell \geq 1} \sqrt{\lambda_{\ell}} \widehat{\xi}^{(N_\ell)}_{\ell}(\omega) e_{\ell}(t),
\end{equation*}
where 
$N_1\times \cdots \times N_\ell\leq N, N_1,\dots, N_\ell\geq 1$, 
so that for large enough $\ell$, $N_\ell=1$ and $\widehat{\xi}^{(N_\ell)}_{\ell}=0$ (that is the optimal $1-$quantization of a centered random variable), which makes the above series
a finite sum.
%
In practice, we fix the number of trajectories $N$ and truncate the summation up to $d_N$, called the $length$ of the product quantization, 
so that the quantizer
\begin{equation}
\widehat W_t = \sum_{\ell =1}^{d_N} \sqrt{\lambda_{\ell}} \widehat{\xi}^{(N_\ell)}_{\ell} e_{\ell}(t)
\label{quantifW}\end{equation}
takes $\prod_{\ell=1}^{d_N} N_\ell \leq N$ values.
Now, let us denote with  $(\widehat{\xi}_{i_\ell}^{(N_{\ell})}, \pi_{i_\ell}^{(N_{\ell})})_{1\leq i_\ell \leq N_\ell}$
 the optimal  $N_\ell-$quantizer of \textcolor{black}{the one-dimensional Normal distribution (with the corresponding weight $\pi_{i_\ell}^{(N_{\ell})}$)}, which is unique and explicitly known (through the c.d.f. of the Normal distribution), and let us introduce the multi-index 
 $\underline i := (i_1,\dots , i_{d_N})\in \prod_{\ell =1}^{d_N}\{ 1,\dots , N_\ell\}$. Then, the quantizer  used for $W$ (at level $N$) is made by the functional codewords
\begin{equation}\label{quantizerW}
\chi_{\underline i}^{(N)}(t) = \sum_{\ell = 1}^{d_N} \sqrt{\lambda_{\ell}}  e_{\ell}(t) \pi_{i_\ell}^{(N_{\ell})}, \quad 1\leq i_\ell \leq N_\ell, \prod_{\ell=1}^{d_N} N_\ell \leq N.
\end{equation}
The weight of the optimal quantizer is given by 
\begin{equation}
\mathbb{P}(\widehat W=\chi_{\underline i}^{(N)}) =\pi^{(N)}_{\underline i}= \prod_{\ell = 1}^{d_N} \mathbb{P}\left(\widehat\xi_\ell^{(N_\ell)}=x^{(N_\ell)}_{i_\ell}\right)=\prod_{\ell = 1}^{d_N}\pi_{i_\ell}^{(N_\ell)},
\end{equation}
where we used that the $\xi_\ell$ are independent in the K-L expansion~\eqref{KL expansion}.
The functional product quantization  $\widehat W$ can then be written as 
\begin{equation*}
\widehat{W}_t = \sum_{\underline i} \chi_{\underline i}^{(N)}(t) \mathds{1}_{ \{W \in C_{\underline i}(\Gamma^{(N)})\} },
\end{equation*}
where $\Gamma^{(N)}:=\{\chi_{\underline i}^{(N)}, \underline i \in \prod_{\ell =1}^{d_N}\{ 1,\dots , N_\ell\} \}$ and $C_{\underline i}(\Gamma^{(N)})=\prod_{\ell=1}^{d_N}C_{i_\ell}(\Gamma^\ell)$ is \textcolor{black}{the Voronoi cell of $\chi_{\underline i}^{(N)}$, that is, the set of all points that are closer to  $\chi_{\underline i}^{(N)}$
 than to any other quantization point, see see \cite{graflush00}.}
  Thus, the expected value of the additive integral functional of the Brownian motion $F(W)$ can be approximated by the following cubature formula
\begin{equation*}
\mathbb{E}F(W)\approx  \sum_{\underline i}\pi^{(N)}_{\underline i}
F(\chi_{\underline i}^{(N)}).
\end{equation*}

Finally, from~\eqref{quantifW} we can 
formally write
\begin{align*}\label{dhatWNexpansion}
d\widehat{W}_t(\omega) &=
\sum_{\ell =1}^{d_N} \sqrt{\lambda_{\ell}} \widehat{\xi}^{(N_\ell)}_{\ell} \textcolor{black}{(\omega)}d(e_{\ell}(t))\\
&= \sum_{\ell = 1}^{d_N} \sqrt{\frac{2}{T}} \cos\left(\frac{t}{\sqrt{\lambda_{\ell}}}\right) \widehat{\xi}_{\ell}^{(N_{\ell})}(\omega) dt, \quad \ell \geq 1\\
&=\sum_{\underline i} \left(
  \chi_{\underline i}^{(N)}\right)'(t) \mathds{1}_{ \{\widehat W\textcolor{black}{(\omega)}=\chi_{\underline i}^{(N)}\}}\textcolor{black}{dt}.
  \end{align*}
In the following, with a slight abuse of notation,  we will replace the last expression by $d\widehat{W}_t =\alpha'(t)dt$ in order to emphasize that the codeword of a quantizer at
level N for the Brownian motion satisfies an ODE.

\subsection{Step 2: Reduction to a Brownian Motion Plus a Drift}

In this subsection, we address the problem of eliminating the diffusion coefficient through the Lamperti transform that we are going to extend in order to allow for the integral term incorporated into the diffusion coefficient of the process \eqref{initialprocess}.
We first introduce the following integral processes:
\begin{align*}
\tilde{Y}_t^h := \int_0^t h(u,Y_u)du,\quad 
\tilde{Y}_t^{g_1} := \int_0^t g_1(u,Y_u)du, \quad
\tilde{Y}_t^{g_2} := \int_0^t g_2(u,Y_u)dW_u.
\end{align*}
We assume that:
\begin{itemize}
\item     the diffusion coefficient $a$ in~\eqref{initialprocess} is $C^1$ with respect to all arguments and (uniformly) elliptic,  namely, there exists $\epsilon_0 > 0$ such that, for every  $(\xi,\xi',\xi'')\!\in \R^3$, 
$
a(\xi,\xi',\xi'') \geq \epsilon_0.
$
\item the function $g_2(t,y)$ is $C^{1,2}$ (continously differentiable in time and twice in space).
\end{itemize}
\begin{defn}
The Lamperti transform associated with the diffusion coefficient  $a(t,y,\tilde{y}^h)$ is defined as:
\begin{equation}
\forall t\geq0, \forall y,\tilde y^h \in \mathbb{R}, \ S(t,y,\tilde{y}^h) = \int_{0}^y \frac{d\xi}{a(t,\xi, \tilde{y}^h)}.\label{lampertiytildey}
\end{equation}
\end{defn}

Under the above assumption, $S$ is well-defined and twice differentiable with respect to $y$ and once with respect to $\tilde y$ and $t$. Its partial derivatives are given by:
\begin{align*}
\frac{\partial S}{\partial y}(t,y,\tilde{y}^h) = \frac{1}{a(t,y, \tilde{y}^h)},\quad
\frac{\partial^2 S}{\partial y^2}(t,y,\tilde{y}^h) = -\frac{\frac{\partial a}{\partial y}(t,y, \tilde{y}^h)}{a^2(t,y, \tilde{y}^h)}, 
\end{align*}

\begin{align*}
\frac{\partial S}{\partial \tilde{y}^h}(t,y,\tilde{y}^h) = -\int_0^y \frac{\frac{\partial a}{\partial \tilde{y}^h}(t,\xi, \tilde{y}^h)}{a^2(t,\xi, \tilde{y}^h)}d\xi,\quad
\frac{\partial S}{\partial t}(t,y,\tilde{y}^h) = -\int_0^y \frac{\frac{\partial a}{\partial t}(t,\xi, \tilde{y}^h)}{a^2(t,\xi, \tilde{y}^h)}d\xi.
\end{align*}

Define $X_t = S(t,Y_t,\tilde{Y}_t^h)$. 
Then:
\begin{align*}
dX_t &= \left(\frac{b(t,Y_t,\tilde{Y}_t^{g_1},\tilde{Y}_t^{g_2})}{a(t,Y_t, \tilde{Y}^h_t)}
+ \frac{\partial S}{\partial t}(t,Y_t,\tilde{Y}^h_t)+ \frac{\partial S}{\partial \tilde{y}^h}(t,Y_t,\tilde{Y}^h_t)h(t,Y_t) - \frac{1}{2}\frac{\partial a}{\partial y}(t,Y_t, \tilde{Y}^h_t)\right)dt + dW_t,
\end{align*}
where we used the fact that $\tilde{Y}^h$ has finite variation.
The strict positivity of $a(t,y,\tilde{y}^h)$ implies that for every $\tilde{y}^h \in \mathbb{R}$ the function $y \mapsto S(t,y,\tilde{y}^h)$ is continuous and strictly increasing. Its continuous inverse function, denoted by $S^{-1}_y(t,.,\tilde{y}_t^h): x\rightarrow  S^{-1}_y(t,x,\tilde{y}_t^h)$, is differentiable with (positive) derivative satisfying
\begin{equation*}
\frac{\partial S^{-1}_y}{\partial x}(t,x,\tilde{y}^h) = a(t,S^{-1}_y(t,x,\tilde{y}^h), \tilde{y}^h).
\end{equation*}

The differential of $X$ can then be expressed as:
\begin{equation}\label{ODEX}
dX_t = \beta(t,S^{-1}_y(t,X_t,\tilde{Y}_t^h),\tilde{Y}_t^{g_1},\tilde{Y}_t^{g_2})dt + dW_t,
\end{equation}
where
 \begin{align*}
\beta(t,S^{-1}_y(t,X_t,\tilde{Y}_t^h),\tilde{Y}_t^{g_1},\tilde{Y}_t^{g_2}) &= \frac{b(t,S^{-1}_y(t,X_t,\tilde{Y}_t^h),\tilde{Y}_t^{g_1},\tilde{Y}_t^{g_2})}{a(t,S^{-1}_y(t,X_t,\tilde{Y}_t^h), \tilde{Y}^h_t)} 
- \frac{1}{2}\frac{\partial a}{\partial y}(t,S^{-1}_y(t,X_t,\tilde{Y}_t^h), \tilde{Y}^h_t)
 \\
&\quad + \frac{\partial S}{\partial \tilde{y}^h}(t,S^{-1}_y(t,X_t,\tilde{Y}_t^h),\tilde{Y}^h_t)h(t,S^{-1}_y(t,X_t,\tilde{Y}_t^h))
 \\
&\quad + \frac{\partial S}{\partial t}(t,S^{-1}_y(t,X_t,\tilde{Y}_t^h),\tilde{Y}^h_t).
\end{align*}
At this stage,  direct quantization of the SDE governing $X$ via the procedure outlined in Step 1 is not sufficient,   due to the presence of a Brownian integral $ \tilde{Y}_t^{g_2}$ in the drift term, which requires to be quantized as well.
The subsequent subsection delineates a methodology for the joint quantization of all Brownian terms, facilitating the quantization of~\eqref{initialprocess}. The approach aligns with the procedure established by \cite{LusPages06} and subsequently refined by \cite{LusPag2023}  through a Lamperti transform, while  for the Brownian term $ \tilde{Y}_t^{g_2}$,  it leverages  the rough path approach  developed in \cite{PagSellami2011}, in which they  show the convergence of  quantized solutions of the ODE  toward the  Brownian integral, where  stochastic integration has to be taken  in the Stratonovich sense.

\subsection{Step 3: Reduction to a System of Ordinary Differential Equations}

Having established the requisite components, we are now well-positioned to prove our principal result on functional quantization, which elucidates the ordinary differential equation governing the functional codewords of the quantizer for the original process~\eqref{initialprocess}, thereby extending the findings of \cite{LusPages06} and \cite{LusPag2023} to encompass scenarios where the diffusion coefficient incorporates a finite variation integral term.



\begin{prop}[Quantization of $Y$]
 $(a)$ {\em Functional codewords}. Assume that $h,g_1,g_2,b, a$ are Borel measurable functions satisfying, for all $(t,y,\tilde{y}^{g_1},\tilde{y}^{g_2},\tilde{y}^{h})\in [0,T]\times \mathbb{R}^4$, the linear growth conditions
\begin{align*}
\vert b(t,y, \tilde{y}^{g_1},\tilde{y}^{g_2})\vert &\leq C (1+\vert (y, \tilde{y}^{g_1},\tilde{y}^{g_2})\vert )\\
0<a(t,y, \tilde{y}^{h}) & \leq C (1+\vert (y, \tilde{y}^{h})\vert)\\
g_2(t,y) &\leq C (1+\vert y\vert ).
\end{align*}
Assume that $a$ is continuously differentiable with respect to both arguments and $ g_2\!\in C^1([0,T]\times\R)$. Then, the functional codewords of the quantizer of a weak solution  $Y=(Y_t)_{t\in [0,T]}$ of the SDE~\eqref{initialprocess} solve the following system of ODEs
\begin{align}\label{eq:codewordY}
\begin{split}
dy_t &=\left(b(t,y_t,\tilde{y}_t^{g_1},\tilde{y}_t^{g_2}) - \frac{1}{2}a(t,y_t, \tilde{y}_t^h)\frac{\partial a}{\partial y}(t,y, \tilde{y}^h_t)\right)dt + a(t,y_t, \tilde{y}_t^h)\alpha'(t)dt\\
d\tilde{y}_t^{g_2}&= g_2(t,y_t)\alpha'(t)dt  - \frac{1}{2}a(t,y_t, \tilde{y}_t^h)\frac{\partial g_2}{\partial y}(t,y_t)dt,
\end{split}
\end{align}
where $\alpha$ is a codeword of $W$.

\noindent $(b)$ {\em Functional quantization of $Y$}. Let  $ \Gamma= \{\alpha_1,\ldots,\alpha_{_N}\}$ be a quantizer of $W$ of size $N\in \mathbb{N}$ with  Voronoi diagram $(C_i(\Gamma))_{i=1,\ldots,N}$, then  the  quantization of  $Y$  is given by
\begin{equation*}
\widehat Y = \widehat Y^{W,\Gamma}:=\sum_{i=1}^N y^{(i)}\mbox{\bf 1}_{\{W\in C_i(\Gamma)\}}.
\end{equation*}
\end{prop}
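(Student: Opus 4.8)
The plan is to derive the ODE system \eqref{eq:codewordY} by first establishing the correct form for the pair $(X_t,\tilde Y_t^{g_2})$ under the extended Lamperti transform, and then passing to quantized codewords via the rough-path/Stratonovich correspondence already invoked in the excerpt. First I would observe that the process $Y$ is recovered from $X=S(Y,\tilde Y^h)$ by $Y_t=S^{-1}_y(X_t,\tilde Y_t^h)$, and that \eqref{ODEX} already gives $dX_t=\beta(t,S^{-1}_y(X_t,\tilde Y_t^h),\tilde Y_t^{g_1},\tilde Y_t^{g_2})\,dt+dW_t$, a pure Brownian-plus-drift equation whose only coupling with the genuinely stochastic term is through $\tilde Y_t^{g_2}=\int_0^t g_2(u,Y_u)\,dW_u$. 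So the state that must be quantized jointly is the vector $(X,\tilde Y^{g_2})$ (together with the finite-variation coordinates $\tilde Y^h,\tilde Y^{g_1}$, which carry no Itô correction). Applying Itô's formula is thus only needed to get the dynamics of $\tilde Y^{g_2}$ when re-expressed in a form amenable to the ODE reduction; since $g_2\in C^{1,2}$, $d\,g_2(t,Y_t)$ is a semimartingale and the Stratonovich form of $d\tilde Y_t^{g_2}=g_2(t,Y_t)\circ dW_t - \tfrac12 d\langle g_2(\cdot,Y),W\rangle_t$ produces the correction $-\tfrac12\,a(y_t,\tilde y_t^h)\,\tfrac{\partial g_2}{\partial y}(t,y_t)$ after using $d\langle Y,W\rangle_t=a(Y_t,\tilde Y_t^h)\,dt$.

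Next I would invoke the quantization principle established in \cite{LusPages06}, \cite{LusPag2023} and \cite{PagSellami2011}: for an SDE of the form $d\Xi_t=\mathbf b(t,\Xi_t)\,dt+\boldsymbol\sigma(t,\Xi_t)\circ dW_t$ driven by a single Brownian motion, writing the equation in Stratonovich form and replacing $W$ by a functional quantizer $\alpha$ (so that $dW\rightsquigarrow\alpha'(t)\,dt$) yields codewords $\xi^{(i)}$ solving the ODE $d\xi_t^{(i)}=\mathbf b(t,\xi_t^{(i)})\,dt+\boldsymbol\sigma(t,\xi_t^{(i)})\,\alpha_i'(t)\,dt$, and this ODE system is precisely what governs the (rate-optimal, non-Voronoi) quantizer $\widehat\Xi=\sum_i \xi^{(i)}\mathbf 1_{\{W\in C_i(\Gamma)\}}$. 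Here I apply this with $\Xi=(X,\tilde Y^{g_2},\tilde Y^h,\tilde Y^{g_1})$: the $X$-coordinate contributes the drift $\beta$ and the diffusion coefficient $1$, giving $dx_t=\beta(t,\cdots)\,dt+\alpha'(t)\,dt$; the $\tilde Y^{g_2}$-coordinate contributes the drift $-\tfrac12 a\,\partial_y g_2$ and the diffusion coefficient $g_2(t,y_t)$, giving the second line of \eqref{eq:codewordY}; and $\tilde Y^h,\tilde Y^{g_1}$ contribute only $h(t,y_t)\,dt$ and $g_1(t,y_t)\,dt$ with no $\alpha'$ term.

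Finally I would translate the $X$-codeword ODE back through the inverse Lamperti map to obtain the stated ODE for $y_t$ directly. Setting $y_t:=S^{-1}_y(x_t,\tilde y_t^h)$ and differentiating, $dy_t=\tfrac{\partial S^{-1}_y}{\partial x}(x_t,\tilde y_t^h)\,dx_t+\tfrac{\partial S^{-1}_y}{\partial \tilde y^h}(x_t,\tilde y_t^h)\,h(t,y_t)\,dt$; using $\tfrac{\partial S^{-1}_y}{\partial x}=a(y_t,\tilde y_t^h)$ and the identity $\tfrac{\partial S^{-1}_y}{\partial \tilde y^h}=-a(y_t,\tilde y_t^h)\,\tfrac{\partial S}{\partial \tilde y^h}(y_t,\tilde y_t^h)$ (obtained by differentiating $S(S^{-1}_y(x,\tilde y^h),\tilde y^h)=x$ in $\tilde y^h$), the three pieces of $\beta$ recombine: the $b/a$ term is multiplied by $a$ to give $b$, the $-\tfrac12\partial_y a$ term gives $-\tfrac12 a\,\partial_y a$, the $\tfrac{\partial S}{\partial \tilde y^h}h$ term inside $\beta$ is cancelled exactly by the $\tfrac{\partial S^{-1}_y}{\partial \tilde y^h}h$ contribution from the chain rule, and the $\alpha'$ term becomes $a(y_t,\tilde y_t^h)\,\alpha'(t)$ — which is exactly the first line of \eqref{eq:codewordY}. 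Part $(b)$ is then immediate from the quantizer construction in Step 1 together with the observation that $\{\widehat W=\alpha_i\}=\{W\in C_i(\Gamma)\}$. The main obstacle I anticipate is not any single computation but the justification of the Stratonovich-to-ODE passage at the level of rigor the paper wants: one must argue that \eqref{initialprocess} genuinely fits the framework of \cite{PagSellami2011} despite the memory terms, i.e. that the coupling through the finite-variation integrals $\tilde Y^h,\tilde Y^{g_1}$ and the single Brownian integral $\tilde Y^{g_2}$ does not break the pathwise rough-path estimates — the linear-growth hypotheses on $b,a,g_2$ are presumably exactly what is needed to close this, but verifying that the augmented system still has the required regularity for the convergence theorem to apply is the delicate point.
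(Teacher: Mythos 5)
Your proposal is correct and follows essentially the same route as the paper: reduce to Brownian-plus-drift via the extended Lamperti transform, obtain the $\tilde y^{g_2}$ correction by passing to the Stratonovich form and computing $\langle g_2(\cdot,Y),W\rangle$ via It\^o, substitute $dW\rightsquigarrow\alpha'(t)\,dt$ in the spirit of \cite{PagSellami2011}, and then undo the Lamperti map using the identity $\frac{\partial S^{-1}_y}{\partial \tilde y^h}=-a\,\frac{\partial S}{\partial \tilde y^h}$ so that the $h$-terms cancel. Your closing remark about the delicacy of justifying the Stratonovich-to-ODE passage by rough-path arguments matches the caveat the authors themselves make after the proof.
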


\begin{proof}
$(a)$ The functional codewords of the quantizer for $X$ in~\eqref{ODEX} are the solution of the following ODE:
\begin{align}
dx_t &= \left(\frac{b(t,S^{-1}_y(t,x,\tilde{y}_t^h),\tilde{y}_t^{g_1},\tilde{y}_t^{g_2})}{a(t,S^{-1}_y(t,x,\tilde{y}_t^h), \tilde{y}^h_t)}
+ \frac{\partial S}{\partial \tilde{y}^h}(t,S^{-1}_y(t,x,\tilde{y}_t^h),\tilde{y}^h_t)h(t,S^{-1}_y(t,x,\tilde{y}_t^h)) 
\right.\nonumber\\
&\quad \left. + \frac{\partial S}{\partial t}(t,S^{-1}_y(t,x,\tilde{y}_t^h),\tilde{y}^h_t) 
 - \frac{1}{2}\frac{\partial a}{\partial y}(t,S^{-1}_y(t,x,\tilde{y}_t^h), \tilde{y}^h_t)\right)dt  + \alpha'(t)dt,\label{ODEx}
\end{align}
where $\alpha(t)$ denotes a codeword of a quantizer $\Gamma$ at level $N$ for the Brownian motion, and  
$\tilde{y}_t^{g_2}$ denotes the functional codeword  for the process $ \tilde{Y}_t^{g_2}$ associated to $\alpha$, to be defined later on.
From $y_t = S^{-1}_y(t,x_t,\tilde{y}_t^h)$, we derive:
\begin{equation}\label{ODEy}
dy_t = a(t,y_t, \tilde{y}_t^h)dx_t + \frac{\partial S^{-1}_y}{\partial \tilde{y}^h}(t,x_t,\tilde{y}_t^h)h(t,y_t)dt+ \frac{\partial S^{-1}_y}{\partial t}(t,x_t,\tilde{y}_t^h)dt.
\end{equation}
The partial derivative $\frac{\partial S^{-1}_y}{\partial \tilde{y}^h}(t,x,\tilde{y}_t^h)$ (resp. $\frac{\partial S^{-1}_y}{\partial t}(t,x,\tilde{y}_t^h)$) can be deduced by differentiating the identity $S(t,S^{-1}_y(t,x,\tilde{y}^h),\tilde{y}^h) = x$ with respect to $\tilde{y}^h$ (resp. with respect to $t$), yielding:
\begin{equation}\label{identity}
\frac{\partial S^{-1}_y}{\partial \tilde{y}^h}(t,x,\tilde{y}^h) = -a(t,y,\tilde{y}^h)\frac{\partial S}{\partial \tilde{y}^h}(t,y,\tilde{y}^h), \quad \frac{\partial S^{-1}_y}{\partial t}(t,x,\tilde{y}^h) = -a(t,y,\tilde{y}^h)\frac{\partial S}{\partial t}(t,y,\tilde{y}^h)
\end{equation}
Substituting the ODE for $x_t$~\eqref{ODEx} into~\eqref{ODEy}, we obtain:
\begin{align*}
dy_t &= \left(b(t,y_t,\tilde{y}_t^{g_1},\tilde{y}_t^{g_2}) - \frac{1}{2}a(t,y_t, \tilde{y}_t^h)\frac{\partial a}{\partial y}(t,y, \tilde{y}^h_t) \right. \\
&\quad \left. + a(t,y_t, \tilde{y}_t^h)\frac{\partial S}{\partial \tilde{y}^h}(t,y_t,\tilde{y}^h_t)h(t,y_t)dt
+ a(t,y_t, \tilde{y}_t^h)\frac{\partial S}{\partial t}(t,y_t,\tilde{y}^h_t)\right)dt \\
&\quad + a(t,y_t, \tilde{y}_t^h)\alpha'(t)dt + \frac{\partial S^{-1}_y}{\partial \tilde{y}^h}(t,x,\tilde{y}^h)h(t,y_t)dt + \frac{\partial S^{-1}_y}{\partial t}(t,x,\tilde{y}^h)dt \\
&= \left(b(t,y_t,\tilde{y}_t^{g_1},\tilde{y}_t^{g_2}) - \frac{1}{2}a(t,y_t, \tilde{y}_t^h)\frac{\partial a}{\partial y}(t,y, \tilde{y}^h_t)\right)dt + a(t,y_t, \tilde{y}_t^h)\alpha'(t)dt,
\end{align*}
where the last equality follows from~\eqref{identity}.
 To complete the expression of the codeword $y$, we move to the functional codeword of $ \tilde{Y}_t^{g_2}$ associated to the code   $\alpha $ of $W$. The stochastic integral  $\tilde{Y}_t^{g_2}$ can be rewritten  in terms of the Stratonovich integral since $g(t,Y_t)$ is a semi-martingale as follows
\begin{align*}
\int_0^t g_2(u,Y_u)dW_u &=\int_0^t g_2(u,Y_u)\circ dW_u  - \frac{1}{2}\langle g_2(.,Y_.),W_.\rangle_t .
\end{align*}
It\^o's Lemma applied to   $g_2\!\in C^{1,2}([0,T]\times\R)$ yields
\begin{align*}
d g_2(t,Y_t) &=(\cdots)dt+ 
\textcolor{black}{a(t,Y_t, \tilde{Y}_t^h)}\frac{\partial g_2}{\partial y}(t,Y_t)dW_t,
\end{align*}
so that  $ 
\langle g_2(.,Y_.),W_.\rangle_t=\int_0^t a(u,Y_u, \tilde{Y}_u^h)\frac{\partial g_2}{\partial y}(u,Y_u)du$
and
\begin{align}
\int_0^t g_2(u,Y_u)dW_u =\int_0^t g_2(u,Y_u)\circ dW_u  - \frac{1}{2}\int_0^t a(u,Y_u, \tilde{Y}_u^h)\frac{\partial g_2}{\partial y}(u,Y_u)du.\label{sellami}
\end{align}
 We then apply \textcolor{black}{Theorem 2 in} \cite{PagSellami2011} in which is shown (see also~\cite{WongZakai1965}) that the appropriate way to quantize the above Stratonovich integral  is to associate  the  codeword $\alpha$ of $W$, namely the ``naive" codeword $\int_0^t g_2(u, y_u)\alpha'(u)du$.  This leads to define the functional codeword of the above  It\^o stochastic integrals  by
\begin{align*}
\tilde y^{g_2}_t = \int_0^t g_2(u, y_u)\alpha'(u)du- \frac{1}{2}\int_0^t a (u,y_u,\tilde{y}_u^{h})\frac{\partial g_2 }{\partial  y}(u,y_u)du,
\end{align*}
where $\alpha$ is a codeword of a quantizer of $W$. The expression~\eqref{eq:codewordY}  of the functional codeword of $Y$ attached to the functional codeword $\alpha$ of $W$ immediately follows. 

\noindent $(b)$ \textcolor{black}{Using rough path theory,  in a larger space of paths which   involves the so-called ``enhanced path of the Brownian motion" (also called geometric multiplicative functional lying on $W$), \cite{PagSellami2011} showed that the quantizations of $Y$ written on the Voronoi diagrams of a quantization $\widehat  W^N$ of $W$ will converge to  $Y$ when $W^N$ converges to $W$ as $N\to+\infty$  for some H\"older norms. Such is the case if $\widehat W^N$ is a sequence of quadratic optimal product quantizers as defined in Section~\ref{subsec:FPQW}, see~\cite[Section~3.1]{PagSellami2011}. Then,  the result follows from the convergence to   $Y$ of its  quantizations  written on the Voronoi diagrams of  $\widehat  W^N$  as $N\to+\infty$.}
\end{proof}
Note that the expression~\eqref{eq:codewordY} is quite similar to that obtained with regular diffusion, including the observed connection with stochastic integration in the Stratonovich sense.
In conclusion, we have therefore arrived at ODEs, or rather a "bundle" of ODEs, where the $\alpha_i$ are the functional codewords of an optimal quantization of $W$. These ODEs can be solved using a numerical scheme, such as e.g. a Runge-Kutta type method or other schemes, like e.g. in  \cite{pagesprintem05}, where they  point
out the efficiency of the Romberg log-extrapolation (sometimes combined with a linear interpolation
method) which numerically outperforms Monte Carlo simulation for moderate values of $N$, say less than $10\,000$.

Let us emphasize again that the Lamperti transformation is a demonstrative tool that allows  to obtain a priori error estimates of the type $O((\log n)^{-1/2})$ for regular (Markovian) diffusions with Lipschitz coefficients. It also requires a   uniform ellipticity assumption on the diffusion coefficient. This is illustrated by  the proofs of \cite[Section 7.2, p.610]{LusPag2023}, see also~\cite{LusPages06}. In the present much more general approach, obtaining similar error bounds is more demanding essentially due to the presence of the stochastic integral $\tilde Y^{g_2}_t$ in the drift coefficient $b$. Approximating such a stochastic integral needs to switch to a Stratonovich formulation  to call upon rough path theory as mentioned above. Doing so, we follow~\cite{PagSellami2011} , still in the spirit of~\cite{WongZakai1965}. Unfortunately, the resulting approximation  by functional quantization, though converging,  did not give rise to error estimates so far, to our best knowledge. This question is strongly connected to the Lipschitz regularity of the so-called It\^o map (see~\cite{Lejay2009} for a survey).
Also, it is important to note that the  ODEs in the proposition refer to functional  codewords of a  quantizer of $Y$ where the cells of the tessellation and the weights are suboptimal  (as they represent a  Voronoi diagram for $W$ only). Actually, calculating the optimal cells and weights for $Y$ would require a massive MC simulation, involving Lloyd's algorithm. We keep such a goal for future research which goes beyond the scope of the present paper. However, in the case  -- not satisfied by the models \textcolor{black}{investigated} in the  present paper ---  where the coefficients of the SDE do not depend on a  stochastic integral  term like $\tilde Y^{g_2}$  (but possibly on standard temporal means of $Y$ like $\tilde Y^{g_1}$), then under some regularity assumptions (e.g., Lipschitz assumptions on the coefficients wrt their arguments uniformly in $t\!\in[0,T]$ and uniform ellipticity  of the diffusion coefficient $a$), a straightforward adaptation of the proof of \cite[Theorem~7.2]{LusPag2023}] yields a quantization rate of $O\big((\log n)^{-1/2}\big)$ when using optimal product quantizations of $W$.

In the next section, we will apply the methodology to the volatility models described in Section \ref{section2}. It is imperative to verify, on a case-by-case basis, the assumptions that ensure the existence of at least a weak solution for the associated SDEs, thereby warranting the application of the entire procedure. Remarkably, thanks to a powerful result on SDEs with path-dependent coefficients (see \cite{rogerswill2000}), we will be able to prove the existence and uniqueness of a strong solution to the SDEs of all the models considered in this paper. These results represent an independent contribution for our paper. In particular, in the case of the model of \cite{GuyonVolMostlyPathDep2022}, our proof is  alternative to the one provided by  \cite{NutzValde2023} and it is also simpler (as could be expected) in the case of an exponential kernel than that  by \cite{AndresJourdain2024} for more general kernels.

\textcolor{black}{We end this section by mentioning the  computational performance of functional quantization, that, as expected,  proves to be significantly more efficient than Monte Carlo simulation. We note that  in the present paper we did not consider acceleration techniques such as the Romberg log-extrapolation employed in \cite{pagesprintem05} in the context of pricing path-dependent options. This choice is motivated partly by the fact that such numerical improvements go beyond the illustrative scope of our numerical experiments, and, more importantly, because the present work focuses on particularly simple European-type products for which pricing via quantization is already extremely efficient. As a reference, the most computationally demanding routine corresponds to the solution - via a fourth-order Runge-Kutta scheme - of the system of ODEs satisfied by the codewords, which is completed in less than one second for $N\approx 158,000$.}

\section{Application to the Model  of \cite{GuyonVolMostlyPathDep2022}}\label{section4}

\subsection{Existence and Uniqueness of a Strong Solution}

In this subsection, we apply Theorem 11.2 from \cite{rogerswill2000} (p. 128), which proves to be instrumental in establishing the existence and uniqueness of the solution to the system of stochastic differential equations (SDEs) defined by~\eqref{R1dyn} and~\eqref{R2dyn}, when the volatility process is specified by~\eqref{volJulien}.
We will apply this theorem to our system of SDEs to establish the existence and uniqueness of a strong solution for the path-dependent volatility model of \cite{GuyonVolMostlyPathDep2022}.

From~\eqref{R2dyn} we get
\begin{align*}
d\sqrt{R_{2,t}}&= \frac{1}{2\sqrt{R_{2,t}}}dR_{2,t}= \frac{\lambda_2}{2}\frac{\sigma_t^2-R_{2,t}}{\sqrt{R_{2,t}}}dt,
\end{align*}
from which we have 
\begin{align*}
\sqrt{R_{2,t}}&= \sqrt{R_{2,0}} +  \frac{\lambda_2}{2}\int_0^t
\Phi (u,\sigma_.)du,
\end{align*}
\textcolor{black}{(with  the notational convention from~\cite{rogerswill2000}  $\Phi (u,\sigma_.)= \Phi(u,(\sigma_v)_{0\le v\le u})$ that we will adopt throughout this section for functionals of this form)} where 
\begin{align*}
\Phi (u,\sigma_.)&=\frac{\sigma_u^2-R_{2,0}e^{-\lambda_2u}-\lambda_2\int_0^ue^{-\lambda_2(u-s)}\sigma_s^2ds}{\left(R_{2,0}e^{-\lambda_2u}+\lambda_2\int_0^ue^{-\lambda_2(u-s)}\sigma_s^2ds\right)^{1/2}},
\end{align*}
obtained by inserting  the solution~\eqref{R2t} of~\eqref{R2dyn}.

Now, from~\eqref{volJulien} we get
\begin{align}\label{eq:EDSsigma}
\sigma_t &= \beta_0+\beta_1R_{1,0}e^{-\lambda_1t}
+\beta_1\lambda_1\int_0^t e^{-\lambda_1(t-u)}\sigma_u dW_u
+\beta_2\sqrt{R_{2,0}}
+\frac{\beta_2\lambda_2}{2}\int_0^t
\Phi (u,\sigma_.)du. 
\end{align}
%
Let us introduce $\tilde\sigma_t:=\sigma_t e^{\lambda_1 t}$, so that
\begin{align}\label{eq:EDStildesigma}
\tilde\sigma_t &= \sigma_0e^{\lambda_1t}
+\beta_1\lambda_1\int_0^t e^{\lambda_1u}\sigma_u dW_u
-\beta_1R_{1,0}(e^{\lambda_1t}-1)
+\frac{\beta_2\lambda_2}{2}e^{\lambda_1t}\int_0^t
\Phi (u,\sigma_.)du,
\end{align}
where $\sigma_0=\beta_0+\beta_1R_{1,0}+\beta_2\sqrt{R_{2,0}}$.
%
Note that 
\begin{align*}
\Phi (u,\sigma_.)&=\frac{\tilde\sigma_u^2e^{-2\lambda_1 u}-R_{2,0}e^{-\lambda_2u}-\lambda_2e^{-\lambda_2 u}\int_0^ue^{(\lambda_2 -2\lambda_1)s}\tilde\sigma_s^2ds}{\left(R_{2,0}e^{-\lambda_2u}+\lambda_2e^{-\lambda_2 u}\int_0^u e^{(\lambda_2 -2\lambda_1)s}\tilde\sigma_s^2ds\right)^{1/2}}=:\tilde\Phi (u,\tilde\sigma_.).
\end{align*}
Therefore \textcolor{black}{$\tilde\sigma_0=\sigma_0$ and}
\begin{align*}
d\tilde\sigma_t &= \left(\lambda_1 \sigma_0e^{\lambda_1t}
-\beta_1\lambda_1R_{1,0} e^{\lambda_1t}
+\frac{\beta_2\lambda_1\lambda_2}{2}e^{\lambda_1t}\int_0^t
\tilde\Phi (u,\tilde\sigma_.)du
+\frac{\beta_2\lambda_2}{2}e^{\lambda_1t}\tilde\Phi (t,\tilde\sigma_.)\right) dt
+\beta_1\lambda_1\tilde\sigma_t dW_t\\
 &= \lambda_1\left( \sigma_0-\beta_1R_{1,0}\right) e^{\lambda_1t}dt
+\frac{\beta_2\lambda_2}{2}e^{\lambda_1t}\left(\lambda_1\int_0^t
\tilde\Phi (u,\tilde\sigma_.)du+\tilde\Phi (t,\tilde\sigma_.)\right) dt
+\beta_1\lambda_1\tilde\sigma_t dW_t.
\end{align*}

The previous equation appears  now as a path-dependent  SDE in  the form required by \cite[Theorem 11.2]{rogerswill2000},  that is
\begin{align}
\tilde\sigma_t&=\sigma_0 + \int_0^t \tilde b(u,\textcolor{black}{\tilde\sigma_.})du +\int_0^t \tilde\nu (u,\textcolor{black}{\tilde\sigma_.})dW_u,\label{sigmaRW}
\end{align}
where, for all  $x\in {\cal C}\left([0,T],\mathbb{R}\right)$,
\begin{align*}
\tilde b(t,x)&= 
 \lambda_1\hskip-0.1cm\left( \sigma_0-\beta_1R_{1,0}\right) e^{\lambda_1t}
\!+\frac{\beta_2\lambda_2}{2}e^{\lambda_1t}\Big(\lambda_1\hskip-0.15cm \int_0^t
\tilde\Phi (u,x_.)du+\tilde\Phi (t,x_.)\hskip-0.1cm\Big) \mbox{ and } \tilde\nu (t,x)=\beta_1\lambda_1 x(t),
\end{align*}
and $\sigma_0=\tilde\sigma_0=\beta_0+\beta_1R_{1,0}+\beta_2\sqrt{R_{2,0}}$.
\textcolor{black}{Note that $\tilde b(t,x)$ and $\tilde\nu(t,x)$ only depend on $x(\cdot \wedge t)$ and are consequently {\em predictable  path-dependent functionals} in the sense of Rogers and Williams' Definition~(6.3). Also note that  $\nu (t,0)=0$ and $\tilde\Phi (t,0)=-\sqrt{R_{2,0}}e^{-\frac{\lambda_2}{2}t}$, so that Condition~(11.3) (that is $\vert \nu (t,0)\vert +\vert b(t,0)\vert \leq C_T$, $t\!\in [0,T]$) of Rogers and Williams' Theorem 11.2  is satisfied. 
}
We will first apply this theorem to a  version of~\eqref{sigmaRW} where we bound the coefficients using a classic localization argument. For this purpose,  let us introduce, for every integer $K\ge 2$, the ${\cal C}^\infty$ function $\phi_K : \mathbb{R}\rightarrow  [-K,K]$ such that $\| \phi_K'\|_{\sup}\leq 1$ and 
\begin{align*}
\phi_K (x)=x \quad \hbox{for}\ \vert x\vert \leq K-1;\quad 
\vert \phi_K (x)\vert =K \ \ \ \hbox{for}\ \vert x\vert \geq K.
\end{align*}

We can then show that, for all  $x\!\in {\cal C}\left([0,T],\mathbb{R}\right)$, the coefficients 
\begin{align*}
\tilde b^{(K)}(t,x)&= 
 \lambda_1\left( \sigma_0-\beta_1R_{1,0}\right) e^{\lambda_1t}
+\frac{\beta_2\lambda_2}{2}e^{\lambda_1t}\left(\lambda_1\int_0^t
\tilde\Phi (u,\phi_K(x_.))du+\tilde\Phi (t,\phi_K(x_.))\right), \\
\tilde \nu^{(K)} (t,x)&=\tilde \nu(t,x) = \beta_1\lambda_1 x(t)
\end{align*}
are uniformly Lipschitz continuous
in the  following sense: for all $K\ge 2$, $t\!\in [0,T]$, and $x$, $y\!\in  {\cal C}\left([0,T],\mathbb{R}\right)$,
 \begin{equation}\label{eq:UniLip}
|\tilde b^{(K)}(t,x)-\tilde b^{(K)}(t,y) |+ |\tilde \nu^{(K)}(t,x)-\tilde \nu^{(K)}(t,y)|\le C_{b,\nu,T}\|x-y\|_{[0,t]},
\end{equation}
where $\|h\|_{[0,t]} := \sup_{0\le s\le t}|h(s)]$. This is obvious for $\tilde \nu^{(K)} (t,x)$, while for $\tilde b^{(K)}(t,x)$ 
 we observe that the map $\tilde b^{(K)}_{t}:x\rightarrow \tilde b^{(K)}(t,x)$ is differentiable on ${\cal C}\left([0,T],\mathbb{R}\right)$ and for any $h\in {\cal C}\left([0,T],\mathbb{R}\right)$ we obtain
\begin{align*}
Db^{(K)}_{t}(x)\cdot h&=
\frac{\beta_2\lambda_2}{2}e^{\lambda_1t}\left(\lambda_1\int_0^t
D\tilde\Phi (u,\phi_K(x_.))\cdot h(u)du+D\tilde\Phi (t,\phi_K(x_.))\cdot h(t)\right).
\end{align*}
Now
{\footnotesize
\begin{align*}
&D\tilde\Phi (u,\phi_K(x_.))\cdot h\\
&=
\frac{-2 \lambda_2e^{-\lambda_2 u}\int_0^u e^{(\lambda_2-2\lambda_1)s}\phi_K(x(s))\phi_K'(x(s))h(s)ds}{\sqrt{R_{2,0}e^{-\lambda_2 u}
+ \lambda_2e^{-\lambda_2 u}\int_0^u e^{(\lambda_2-2\lambda_1)s}\phi_K(x(s))^2ds}}\\
&-\frac{\left(\phi^2_K(x(t))e^{-2\lambda_1 u} \hskip-0.15cm -R_{2,0}e^{-\lambda_2 u}\hskip-0.15cm- \lambda_2e^{-\lambda_2 u} \hskip-0.15cm\int_0^u e^{(\lambda_2-2\lambda_1)s}\phi^2_K(x(s))ds
\right) \hskip-0.1cm\lambda_2e^{-\lambda_2 u}\int_0^u e^{(\lambda_2-2\lambda_1)s}\hskip-0.1cm\phi_K(x(s))\phi_K'(x(s))h(s)ds}
{\left(R_{2,0}e^{-\lambda_2 u}
+ \lambda_2e^{-\lambda_2 u}\int_0^u e^{(\lambda_2-2\lambda_1)s}\phi_K(x(s))^2ds\right)^{3/2}},
\end{align*}
}
so that
\footnotesize
\begin{align*}
\| D\tilde\Phi (u,\phi_K(x_.))\cdot h\|_{\sup}&\leq 
 2\lambda_2e^{-\lambda_2 u}
 \frac{
 \left(\int_0^u e^{2(\lambda_2-2\lambda_1)s}\phi_K^2(x(s))(\phi_K'(x(s)))^2ds\right)^{1/2}
\left(\int_0^uh^2(s)ds\right)^{1/2}}
{\left(R_{2,0}e^{-\lambda_2 u}\right)^{1/2}}\\
 &\quad +\lambda_2e^{-\lambda_2 u}K^2
\frac{
 \left(\int_0^u e^{2(\lambda_2-2\lambda_1)s}\phi_K^2(x(s))(\phi_K'(x(s)))^2ds\right)^{1/2}
\left(\int_0^uh^2(s)ds\right)^{1/2}
}{\left(R_{2,0}e^{-\lambda_2 u}\right)^{3/2}}\\
&\leq \| h\|_{\sup}\frac{\lambda_2 \textcolor{black}{u^{\frac{1}{2}}}e^{-\lambda_2 u}K^2
\left(\int_0^u e^{2(\lambda_2-2\lambda_1)s}ds\right)^{1/2}}{\Big(R_{2,0}e^{-\lambda_2 u}\Big)^{1/2}}
\Big(2+\frac{K^2}{R_{2,0}e^{-\lambda_2 u}}\Big) =\|h\|_{[0,u]}f(u,K).
\end{align*}

\normalsize
In conclusion, we clearly have
\begin{align*}
\vert Db^{(K)}_{t}(x)\cdot h\vert &\leq\|h\|_{[0,t]}
\frac{\beta_2\lambda_2}{2}e^{\lambda_1t}\left(\lambda_1\int_0^t
f(u,K)du+f(t,K)\right),
\end{align*}
\textcolor{black}{therefore also the coefficient $\tilde b^{(K)}$ and  $\nu^{(K)}$ are   uniformly Lipschitz   in the sense of~\eqref{eq:UniLip}, so that Rogers and Williams'  Theorem 11.2  applies to the process, showing strong existence and uniqueness of $\tilde \sigma^{(K)}_t$ solution to the path-dependent SDE 
\begin{align*}
\tilde\sigma^{(K)}_t&=\sigma_0 + \int_0^t \tilde b^{(K)}(u,\tilde\sigma^{(K)}_{.\wedge t})du +\int_0^t \tilde \nu^{(K)} (u,\tilde\sigma^{(K)}_{.\wedge t})dW_u, \quad t\ge 0.
\end{align*}
Now we set  $\tilde \tau_{K}:=\inf \{t: \vert \tilde \sigma_t^{(K)}\vert \geq K-1\}$. One easily checks  by the definition of functionals $\Phi_K$ and the local property of stochastic integral w.r.t. stopping times and the fact that $\tilde \tau_k\le \tilde \tau_{k+2}$ that
\[
\mathbb{P}\mbox{-}a.s.\quad \tilde  \sigma^{(K+1)}_{\cdot\, \wedge \tilde \tau_{K}} =\tilde \sigma^{(K)}_{\cdot\, \wedge \tilde \tau_{K}}.
\]
Taking advantage of this telescopic feature we may define in a consistent way 
\begin{equation}\label{eq:tilde-sigma}
\tilde \sigma_t = \tilde \sigma^{(K)}_t \quad \mbox{ on }\quad \{t\le \tau_K\}.
\end{equation}
From~\eqref{volJulien},~\eqref{R1t} and~\eqref{R2t}, we get for every $K\ge 2$ and $t\ge 0$,
\begin{align*}
\tilde \sigma^{(K)}_{t\wedge \tilde \tau_K} &= \beta_0e^{\lambda_1(t\wedge \tilde \tau_K)}+\beta_1\Big(R_{1,0}+\lambda_1\int_0^{t\wedge \tilde \tau_K}\tilde \sigma_udW_u\\
&\qquad +\beta_2e^{(\lambda_1-\lambda_2/2)(t\wedge \tilde \tau_K)} \big(R_{2,0} + \int_0^{t\wedge \tilde \tau_K} e^{-(2\lambda_1-\lambda_2)u} \tilde \sigma_u^2du\big)^{1/2}\Big).
\end{align*}
Hence, using the $\tilde \sigma_u = \sigma^{(K)}_u$ when $u\le s\wedge \tilde \tau_K$,
\begin{align*}
g_K(t) := \mathbb{E}\left[\sup_{s\le t} (\tilde\sigma^{(K)}_t)^2\right]&\leq  3^2\Big( (\beta_0e^{\lambda_1 t}\hskip-0.5cm  +\beta_1R_{1,0})^2+\beta_2^2 e^{|2\lambda_1-\lambda_2|t}R_{2,0}+       \beta_1^2\lambda^2_1\mathbb{E}\left[\sup_{s\le t}\Big(\int_0^{s\wedge \tilde \tau_K} \hskip-0.25cm \tilde \sigma_udW_u\Big)^2\right]\\
 &\qquad + \beta_2^2 e^{|2\lambda_1-\lambda_2|t}  \lambda_2\E\left[\,\int_0^{t\wedge \tilde \tau_K} \hskip-0.25cm(\tilde\sigma^{(K)}_u)^2du\right]\Big).
\end{align*}
\normalsize
Then applying Doob's inequality   to the  stopped stochastic integral term, Fubini's Theorem  and the fact that $\mathbb{E}\left[\big(\int_0^{t\wedge \tilde \tau_k} \tilde \sigma^2_u du\big)^2\right] \le\int_0^t \mathbb{E}\left[\tilde \sigma^2_{u\wedge  \tilde \tau_K}\right]du\le  \int_0^t g_K(u)du$, finally yields 
$g_K(t)\le C_1(t) +C_2(t)\int_0^t g_K(u)du$
 with $C_1(t)=9\Big( (\beta_0e^{\lambda_1 t}\! +\beta_1R_{1,0})^2\!+\beta_2^2 e^{|2\lambda_1-\lambda_2|t}R_{2,0}\Big)$ and $C_2(t) = 9(4  \beta_1^2\lambda^2_1+ \beta_2^2 e^{|2\lambda_1-\lambda_2|t})$ which are both  increasing in $t$ and do not depend on $K$. 
Gr\"onwall's Lemma then  yields 
$$ 
\mathbb{E} \left[\sup_{s\le t\wedge \tilde \tau_K}(\tilde\sigma_s)^2 \right]= \mathbb{E}\left[\sup_{s\le t} (\tilde\sigma^{(K)}_{s\wedge \tilde \tau_K})^2\right]= g_K(t)\le C_1(t)e^{C_2(t)t},\; t\ge 0.
$$
As a consequence, for every integer $K\ge 2$, 
\[
\mathbb{P}(\tilde \tau_K \le t) \le  \mathbb{P}\big(\sup_{0\le s\le t\wedge \tilde \tau_K }|\tilde \sigma^{(K)}_s|\ge K-1\big) \le \frac{g_K(t)}{(K-1)^2}\le \frac{C_1(t) e^{ C_2(t)t}}{(K-1)^2}
\]
which in turn implies that $\tilde \tau_\infty= \lim_{K} \tilde \tau_K= \sup_{K\ge 2} \tilde \tau_K$ satisfies 
\[
\forall\, t\ge 0, \quad \mathbb{P}(\tilde \tau_\infty \le  t) = \lim_{K}  \mathbb{P}(\tilde \tau_K\le t)= 0 \quad \mbox{ i.e. } \quad \tilde \tau_\infty = +\infty\; \mathbb{P}\mbox{-}a.s. 
\]
Hence  $\tilde \sigma_t$   defined by~\eqref{eq:tilde-sigma} is well-defined over $[0, +\infty)$ (up to a negligible set) since $\tilde \tau_\infty$ is clearly its explosion time and is solution to~\eqref{eq:EDStildesigma}. Uniqueness is obvious  by localization. Then we set $\sigma_t = e^{\lambda_1t}\tilde \sigma_t$  and we straightforwardly check  using It\^o's Lemma that it is solution to~\eqref{eq:EDSsigma}.\hfill$\Box$
}

\medskip

\begin{Remark}
\textcolor{black}{The proof of strong existence and uniqueness in the framework of Volterra equations with power kernels by  \cite{AndresJourdain2024} is based on different localization arguments (the counterpart of Theorem 11.2 in \cite{rogerswill2000} does not yet exist for such SDE). 
 For example,   \cite{AndresJourdain2024} show that, from a practical point of view,  it is important to use two time scales in $R_{2,t}$, while a single exponential kernel is sufficient in $R_{1,t}$ to ensure a good calibration of the \cite{GuyonVolMostlyPathDep2022} model. Such extensions can be easily taken into account in our  proof for existence and uniqueness.
Finally, our approach enables the development of efficient numerical methods, which is considerably more challenging in the framework of \cite{AndresJourdain2024} that cannot be reduced to finite-dimensional Markovian models.}
\end{Remark}

\subsection{Parametric Restrictions for the Positivity of the Volatility Process}

The positivity of the volatility process in the model of \cite{GuyonVolMostlyPathDep2022} requires some parametric restrictions. In fact, following this time  \cite{NutzValde2023}, we can provide a brief  sketch of the proof by taking
\begin{align*}
d\ln \sigma_t&=\beta_1\lambda_1dW_t -\frac{1}{2}\beta_1^2\lambda_1^2 dt+
\frac{1}{\sigma_t}\left( -\beta_1 \lambda_1 R_{1,t}+
\frac{\beta_2\lambda_2}{2}\frac{\sigma_t^2-R_{2,t}}{\sqrt{R_{2,t}}}\right)dt
\nonumber\\
&=\beta_1\lambda_1dW_t -\frac{1}{2}\beta_1^2\lambda_1^2 dt
+
\frac{1}{\sigma_t}\left( - \lambda_1 \sigma_t+ \beta_0\lambda_1
+\beta_2\left(\lambda_1-\frac{\lambda_2}{2}\right)\sqrt{R_{2,t}}
+\frac{\beta_2\lambda_2}{2}\frac{\sigma_t^2}{\sqrt{R_{2,t}}}\right)dt
\nonumber\\
&\geq\beta_1\lambda_1dW_t -\frac{1}{2}\beta_1^2\lambda_1^2 dt  - \lambda_1 dt
\end{align*}
provided that $\lambda_2<2\lambda_1$.
It follows that, if $\sigma_0>0$ and $\lambda_2<2\lambda_1$, the volatility remains bounded away from zero, in fact
$ \sigma_t\geq\sigma_0 e^{\beta_1\lambda_1W_t -\frac{1}{2}\beta_1^2\lambda_1^2 t}e^{-\lambda_1 t}>0$.
This can be made rigorous by appropriate localization arguments.

\subsection{Lamperti Transform}

From~\eqref{aJulien} with $\epsilon>0$, namely 
\begin{align*} a\Big(t,y_t,\int_{\epsilon}^t h(u,y_u)du\Big)=\beta_1\lambda_1 y_t 
\end{align*}
we get that there is no dependence on $\tilde y_t^h$ in the function $S$:
\begin{equation*}
S(y,\tilde y^h)= \int_{\epsilon}^y \frac{d\xi}{a (\xi, \tilde y^h)}=\int_{\epsilon}^y \frac{d\xi}{\beta_1\lambda_1\xi}=
\frac{1}{\beta_1\lambda_1}\ln \left( \frac{y}{\epsilon}\right),
\end{equation*}
so that the inverse is simply given by
\begin{equation*}
y=S^{-1}_y(x,\tilde y^h)= \epsilon \exp (\beta_1\lambda_1 x). 
\end{equation*}

Let us now consider the drift $b$ in~\eqref{bJulien}:
\begin{align*} 
b\Big(t,y_t,\tilde y_t^{g_1},\tilde y_t^{g_2}\Big)&= 
-\beta_1 \lambda_1 e^{-\lambda_1 t}\left( R_{1,0}+\lambda_1\tilde y_t^{g_2}\right)+
\frac{\beta_2\lambda_2}{2}\frac{y_t^2-e^{-\lambda_2 t}\left( R_{2,0}+ \lambda_2\tilde y_t^{g_1}\right)}{\sqrt{e^{-\lambda_2 t}\left( R_{2,0}+ \lambda_2\tilde y_t^{g_1}\right)}}
\end{align*}
so that the ODE satisfied by the codeword process $y_t$ becomes 
{\small  \begin{align}
d y_t&= \left(b(t,y_t,\tilde y_t^{g_1},\tilde y_t^{g_2})-\frac{1}{2}a (y_t, \tilde y_t^h)\frac{\partial a}{\partial y}(y, \tilde y^h_t)\right)dt
+a (y_t, \tilde y_t^h)\alpha'(t)dt\nonumber\\
&= \left(
-\beta_1 \lambda_1 e^{-\lambda_1 t}\left( R_{1,0}+\lambda_1\tilde y_t^{g_2}\right)+
\frac{\beta_2\lambda_2}{2}\frac{y_t^2-e^{-\lambda_2 t}\left( R_{2,0}+ \lambda_2\tilde y_t^{g_1}\right)}{\sqrt{e^{-\lambda_2 t}\left( R_{2,0}+ \lambda_2\tilde y_t^{g_1}\right)}}
-\frac{\beta^2_1\lambda^2_1}{2}y_t\right)dt
+\beta_1\lambda_1 y_t\alpha'(t)dt\label{codewardsJulien}
\end{align}
}
where we have 
\begin{align*}
\tilde y_s^{g_1}\!&\!=\int_{0}^s g_1(u,y_u)du= \int_{0}^t e^{\lambda_2 u}  y_u^2du,\\
\tilde y^{g_2}_t \!&\!= \int_0^t g_2(u, y_u)\alpha'(u)du- \tfrac{1}{2}\int_0^t\!\! a (\textcolor{black}{y_u},\tilde{y}_u^{h})\frac{\partial g_2 }{\partial  y}(u,y_u)du
= \!\int_{0}^t e^{\lambda_1 u} y_u\alpha'(u)du-\frac{\beta_1\lambda_1}{2} \!\!\int_0^t e^{\lambda_1 u} y_udu.
\end{align*}
For the model proposed by \cite{GuyonVolMostlyPathDep2022}, the functional quantization method can be applied with remarkable simplicity due to the absence of an integral term in the diffusion coefficient.
This simplification in the application of functional quantization stands in contrast to more complex models where the diffusion term includes integral components, like the one we shall consider in the next section.

\subsection{Numerical Illustration for the Model of \cite{GuyonVolMostlyPathDep2022}}

We use the same parameters as in \cite{GuyonVolMostlyPathDep2022} for their 2-factor specification: $\beta_0=0.08;\ \beta_1=-0.08;\ \beta_2=0.5;\ \lambda_1=62;\ \lambda_2=40$ and $R_{1,0}=-0.044;\ R_{2,0}=0.007;\ S_0=100;\ r=0;\ T=1$ with $500$ steps for the time discretization and $
\sigma_0=y_0 = \beta_0 + \beta_1 R_{1,0} + \beta_2 \sqrt{R_{2,0}}=0.1254$.

For the spatial discretization, we retain $d_N=20$  coefficients in the Karhunen-Lo\`eve expansion, which are partitioned into five independent blocks,
$\{ (1:4); (5:8); (9:12); (13:16); (17:20)\}$, each consisting of independent four-dimensional Gaussian vectors. Vector quantization is then applied separately to each block, using $N_1=60$ points for the first block, $N_2=45$ for the second, and $N_3=9$,  $N_4=3$, $N_5=2$ for the remaining blocks, respectively. This allocation is clearly not unique and reflects a deliberate trade-off between approximation accuracy and computational complexity.

From an intuitive standpoint, the infinite-dimensional Brownian motion on $[0,1]$ is approximated by a finite-dimensional truncated Karhunen-Lo\`eve expansion with $d_N=20$ modes, which captures more than $99.5\%$ of the total variance of the process. Consequently, each Brownian path can be identified with a point in $\mathbb{R}^{20}$. Product functional quantization is then carried out on this finite-dimensional representation, yielding a total of $N=N_1\times N_2\times N_3\times N_4\times N_5 =157,950$ distinct quantized trajectories.

The spot volatility \eqref{volJulien}, that we recall for convenience:
\begin{equation*}
\sigma_t = \sigma(R_{1,t}, R_{2,t}) = \beta_0 + \beta_1 R_{1,t} + \beta_2 \sqrt{R_{2,t}}
\end{equation*}
is quantized through the codewords $y_t$ satisfying the  ODEs \eqref{codewardsJulien}. 
Now let us consider the functional quantization of the underlying process $S$ given by
\begin{align*}
S_t  &= S_0 \exp\left( -\frac{1}{2}\int_0^t Y_s^2ds+   \int_0^t  Y_s dW_s  \right)\\
&= S_0 \exp\left( -\frac{1}{2}\int_0^t Y_s^2ds+ \int_0^t Y_s \circ dW_s
 - \frac{1}{2}\int_0^t a(s,Y_s)\frac{\partial Y_s}{\partial y}ds \right)\end{align*}
where we applied \eqref{sellami} to
the Stratonovich  integral, so that the functional codeword for the underlying process is given by
\begin{align*}
\hat S_t  &= S_0 \exp\left( -\frac{1}{2}\int_0^t y_s^2ds+ \int_0^t y_s \alpha'(s)ds
 - \frac{\beta_1\lambda_1}{2}\int_0^t y_sds \right), 
\end{align*}
where we used that $a(s,Y_s)=\beta_1\lambda_1 Y_s$. We now have all the ingredients to solve the system of ODEs using a fourth-order Runge-Kutta scheme and employ the functional quantization framework introduced above to determine the Brownian codeword $\alpha$.

As a first step, we visualize some quantized trajectories for  $\hat S$  and $y$ associated to  the volatility process. In Figure~\ref{fig:Julien100pathSandY}, we also present $100$ trajectories of the original stochastic processes $S$ and $Y$ generated by Monte Carlo simulation using a standard Euler scheme of the corresponding SDEs, which serve both as a reference benchmark for all subsequent numerical results and as a consistency check of the methodology.


\begin{figure}[!htbp]
\centering

\begin{minipage}{0.45\linewidth}
\centering
\includegraphics[width=\linewidth]{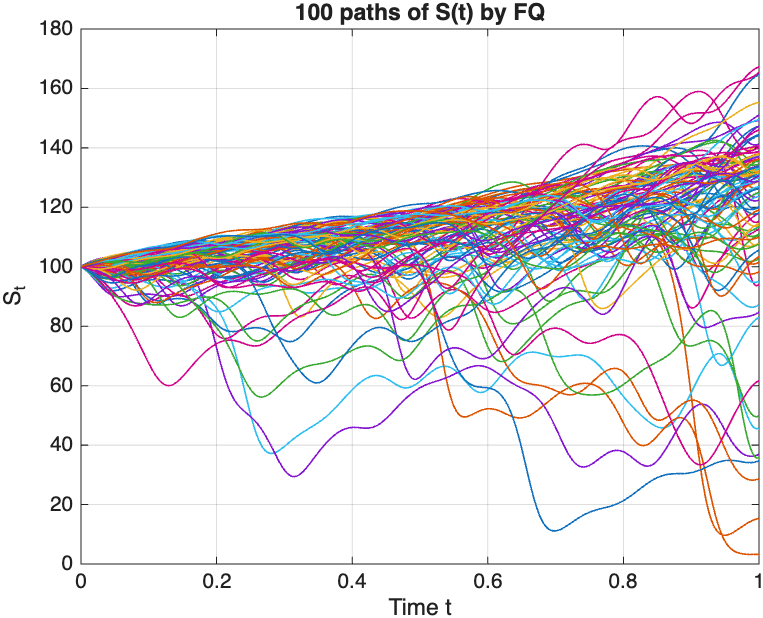}
\end{minipage}
\begin{minipage}{0.45\linewidth}
\centering
\includegraphics[width=\linewidth]{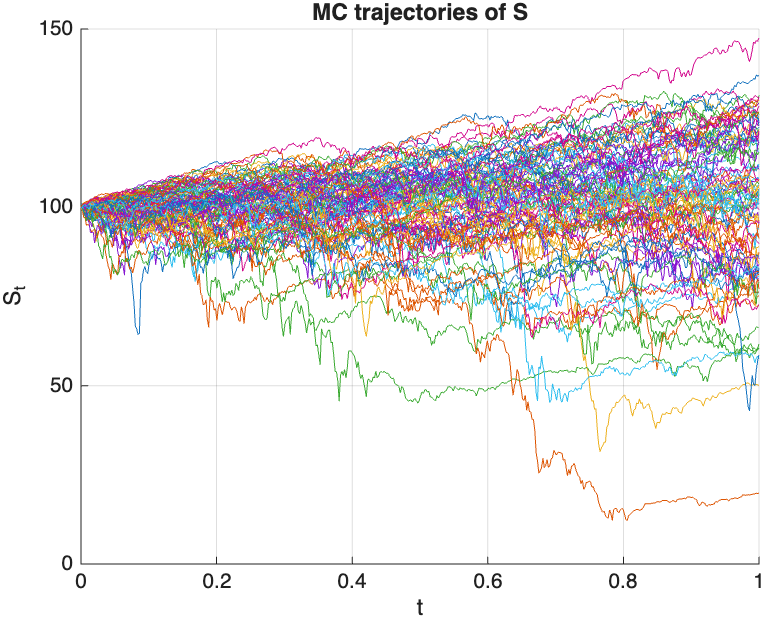}
\end{minipage}

\vspace{0.5cm}

\begin{minipage}{0.45\linewidth}
\centering
\includegraphics[width=\linewidth]{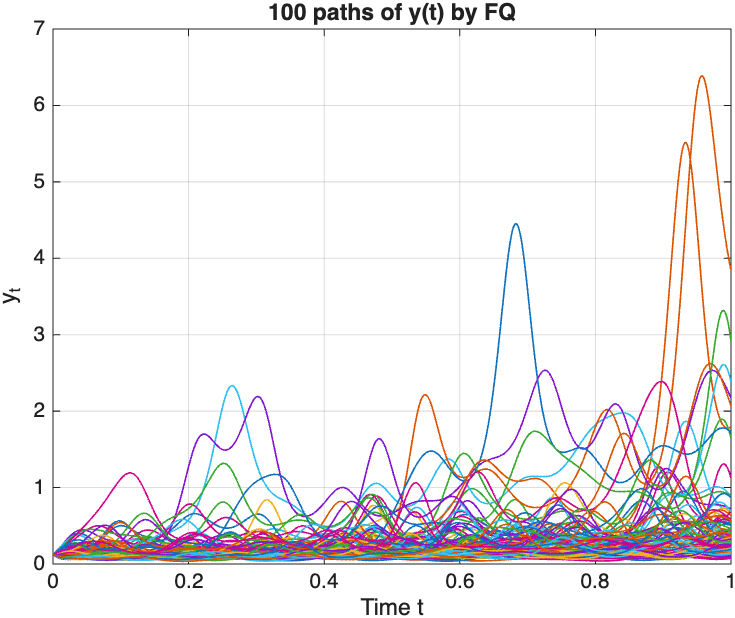}
\end{minipage}
\begin{minipage}{0.45\linewidth}
\centering
\includegraphics[width=\linewidth]{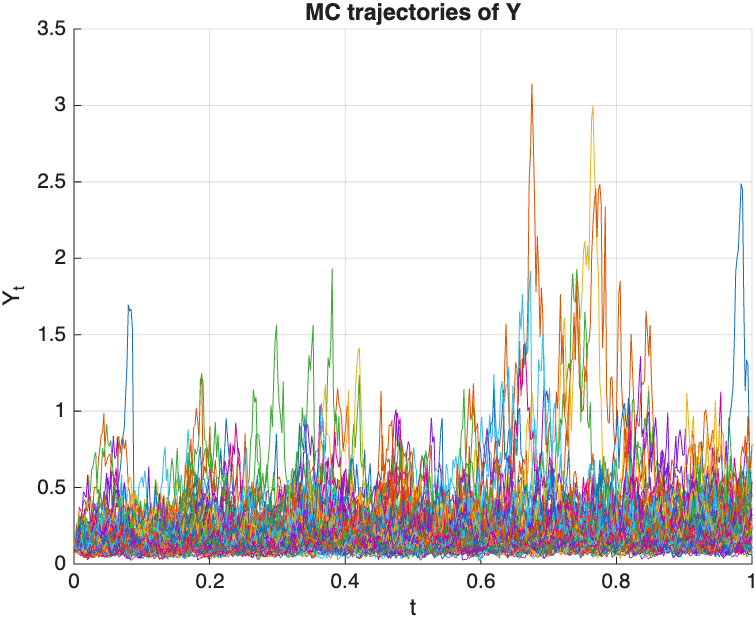}
\end{minipage}

\caption{Visualization of 100 sample paths of the process $S$ (top panels) and the process $y$ (bottom panels) for the 2-factor specification of the model of \cite{GuyonVolMostlyPathDep2022}. The left panels correspond to trajectories obtained via functional quantization, while the right panels display trajectories generated from a Monte Carlo simulation with $10^5$ paths. The model parameters are as follows: $\beta_0=0.08;\ \beta_1=-0.08;\ \beta_2=0.5;\ \lambda_1=62;\ \lambda_2=40;\  R_{1,0}=-0.044;\ R_{2,0}=0.007;\ S_0=100;\ r=0;\ T=1$ with $500$ steps for the time discretization and $
\sigma_0=y_0 = 0.1254$. }
\label{fig:Julien100pathSandY}
\end{figure}

We observe that, under the parametrization proposed in \cite{GuyonVolMostlyPathDep2022}, the dynamics are strongly influenced by the relatively large values of the parameters $\lambda_1$ and $\lambda_2$. These parameters control the strength of the memory effect and enter the exponential kernel defining the volatility process $\sigma$ (hereafter interchangeably denoted by $y$ with a slight abuse of notation). As a consequence, the persistence induced by the kernel significantly affects the temporal evolution of the process, highlighting the central role of long-range dependence in shaping its pathwise behavior.

We therefore carry out a brief sensitivity analysis with respect to variations in the model parameters $\lambda_1, \lambda_2$ and $\beta_1, \beta_2$, with the aim of gaining preliminary insight into their influence, first on the pathwise behavior of the processes and subsequently on the implied volatility smile of the underlying asset.

In Figure~\ref{fig:Julien_impact_lambda1_lambda2_on_y}, we illustrate the impact of variations in $\lambda_1$ and $\lambda_2$ on the mean and standard deviation of the (time-averaged) codeword $y_t$. Figure~\ref{fig:Julien_impact_beta1_beta2_on_y} presents the corresponding sensitivity analysis with respect to the parameters $\beta_1$ and $\beta_2$.

We observe that $\lambda_1$ exerts a markedly stronger influence on the volatility process $\sigma$ than $\lambda_2$. This behavior is expected, as $\lambda_1$ enters the dynamics of $\sigma$ through the stochastic integral term, thereby directly amplifying or damping the standard deviation of volatility. In particular, for sufficiently small values of $\lambda_1$, the volatility process becomes effectively neutralized, driving the model toward a Black-Scholes-type regime with (approximately) constant volatility and, consequently, a flat implied volatility smile.

Turning to the parameters $\beta_1$ and $\beta_2$, whose magnitudes are significantly smaller than those of $\lambda_1$ and $\lambda_2$, the situation is reversed. In this case, $\beta_2$ has a substantially larger impact--exceeding that of $\beta_1$ by more than a factor of two in absolute value--on the level of volatility, highlighting its dominant role within this parameter pair.



\begin{figure}[!htbp]
\centering

\begin{minipage}{0.24\linewidth}
\centering
\includegraphics[width=\linewidth]{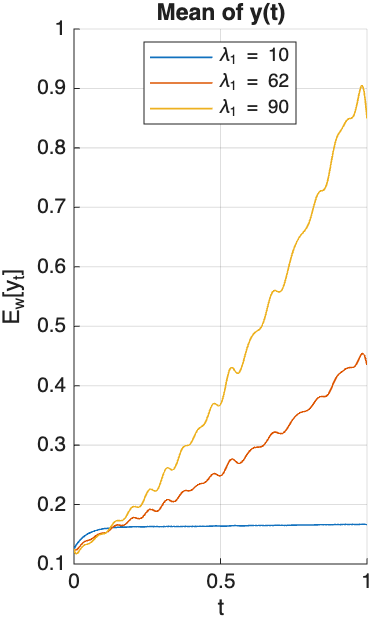}
\end{minipage}
\hfill
\begin{minipage}{0.24\linewidth}
\centering
\includegraphics[width=\linewidth]{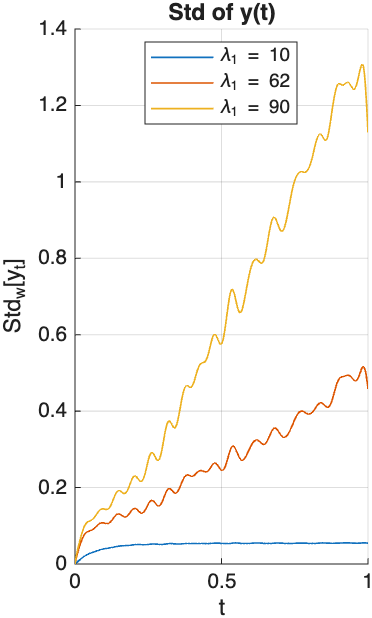}
\end{minipage}
\hfill
\begin{minipage}{0.24\linewidth}
\centering
\includegraphics[width=\linewidth]{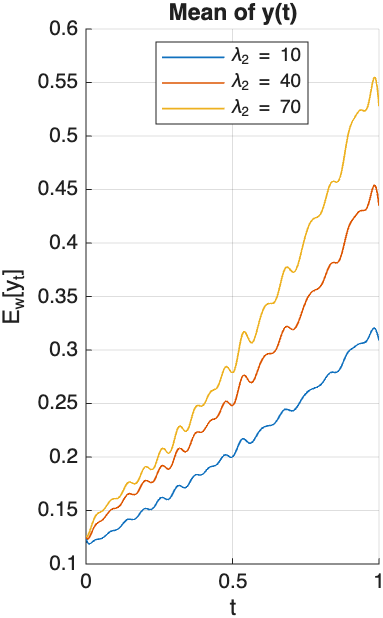}
\end{minipage}
\hfill
\begin{minipage}{0.24\linewidth}
\centering
\includegraphics[width=\linewidth]{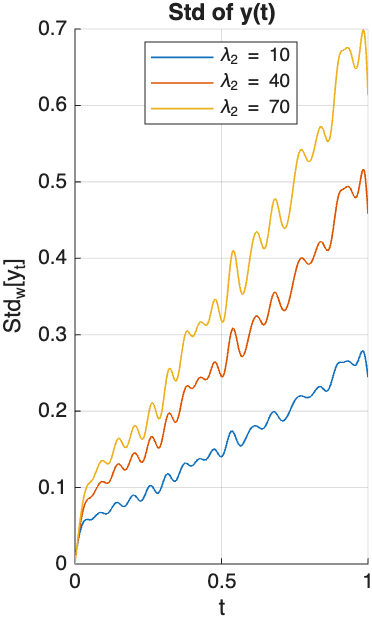}
\end{minipage}

\caption{Impact of variations in $\lambda_1$ and $\lambda_2$ on the mean and standard deviation of the (time-averaged) codeword $y_t$.}
   \label{fig:Julien_impact_lambda1_lambda2_on_y}
\end{figure}



\begin{figure}[!htbp]
\centering

\begin{minipage}{0.24\linewidth}
\centering
\includegraphics[width=\linewidth]{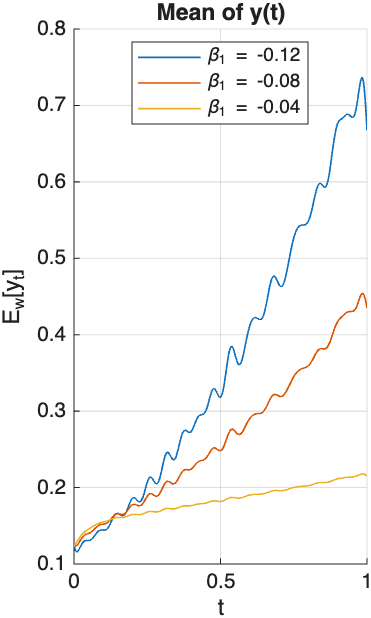}
\end{minipage}
\hfill
\begin{minipage}{0.24\linewidth}
\centering
\includegraphics[width=\linewidth]{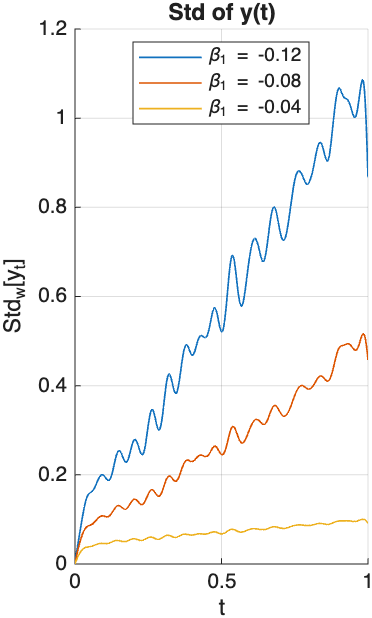}
\end{minipage}
\hfill
\begin{minipage}{0.24\linewidth}
\centering
\includegraphics[width=\linewidth]{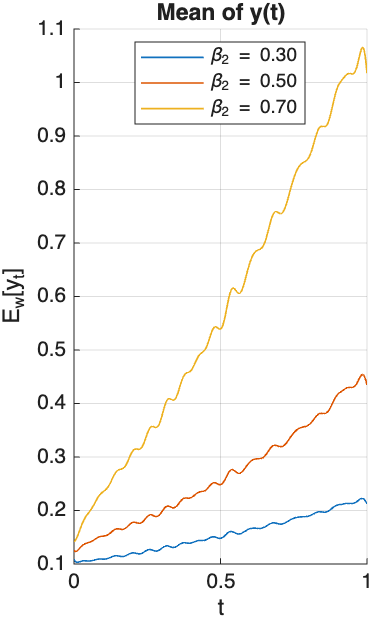}
\end{minipage}
\hfill
\begin{minipage}{0.24\linewidth}
\centering
\includegraphics[width=\linewidth]{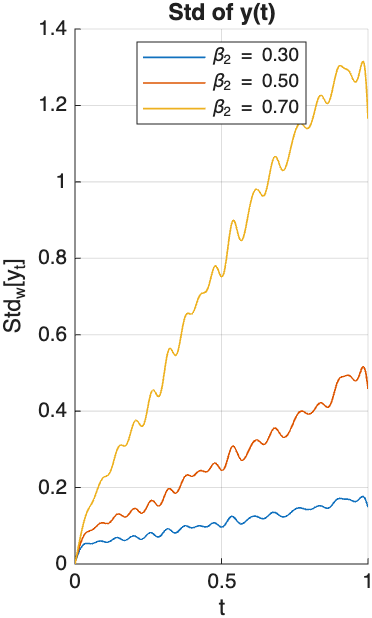}
\end{minipage}

\caption{Impact of variations in $\beta_1$ and $\beta_2$ on the mean and standard deviation of the (time-averaged) codeword $y_t$.}
   \label{fig:Julien_impact_beta1_beta2_on_y}
\end{figure}

We now turn to assessing the ability of the quantization approach to reproduce the stylized features of the option market, with particular emphasis on implied volatility curves. Figure~\ref{fig:Smiles_julien} compares implied volatility smiles and skews across a range of maturities, from one month up to twelve months, generated using the model parameters via product functional quantization, with those obtained from Monte Carlo simulation. This comparison closely mirrors the analysis presented in \cite{GuyonVolMostlyPathDep2022}. The results demonstrate a good level of agreement: functional quantization accurately captures the shape and term structure of implied volatility, successfully reproducing the key market features and yielding outcomes that are fully consistent with the Monte Carlo benchmark.

We conclude this numerical illustration by quantifying the impact of the Stratonovich correction term. In Figure~\ref{fig:Smiles_julien}, we additionally report the implied volatility smiles obtained when the corrective term is omitted in the functional quantization of the asset price process $S$, while keeping the correctly quantized codewords for the volatility process $y$. The resulting smiles differ substantially from those previously reported, highlighting the presence of a significant bias induced by neglecting the correction. This discrepancy would likely be even more pronounced if the corrective term were also omitted from the ODEs governing the codewords of the process $y$.


\begin{figure}[!htbp]
\centering

\begin{minipage}{0.45\linewidth}
\centering
\includegraphics[width=\linewidth]{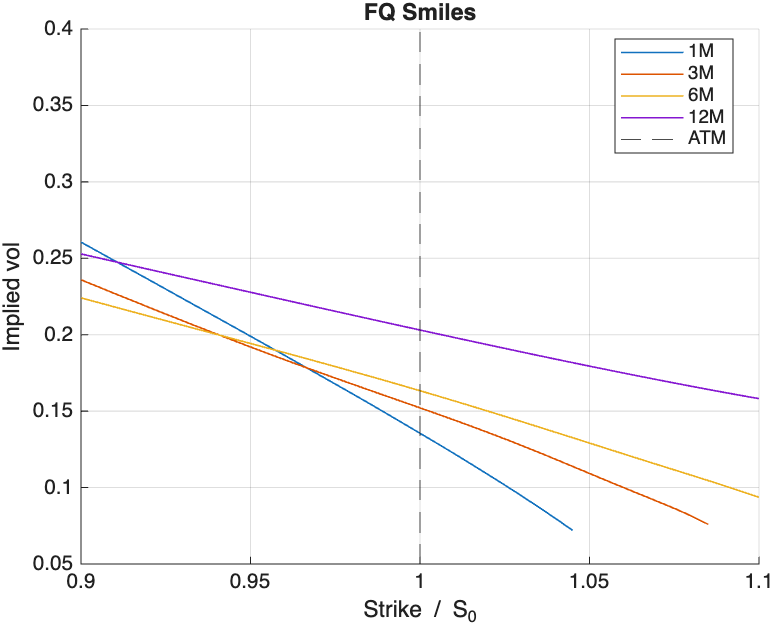}
\end{minipage}
\hfill
\begin{minipage}{0.45\linewidth}
\centering
\includegraphics[width=\linewidth]{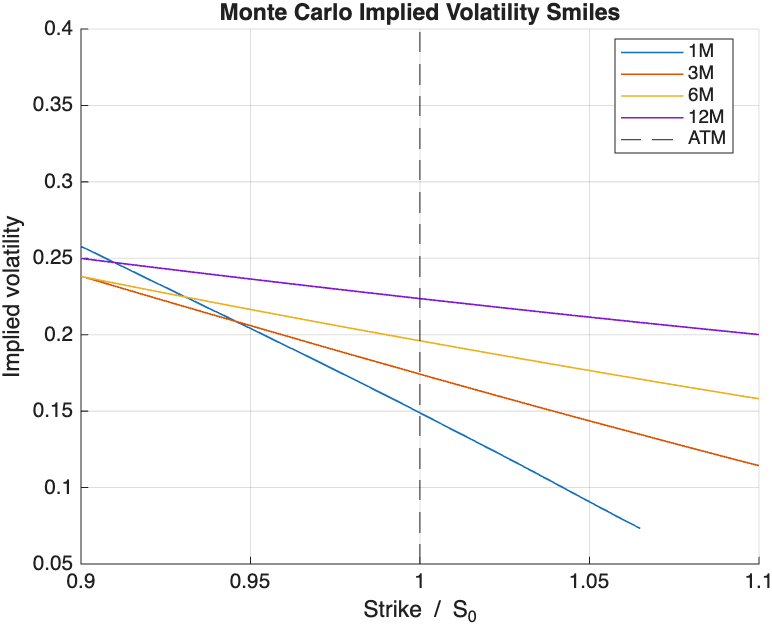}
\end{minipage}
\hfill
\begin{minipage}{0.45\linewidth}
\centering
\includegraphics[width=\linewidth]{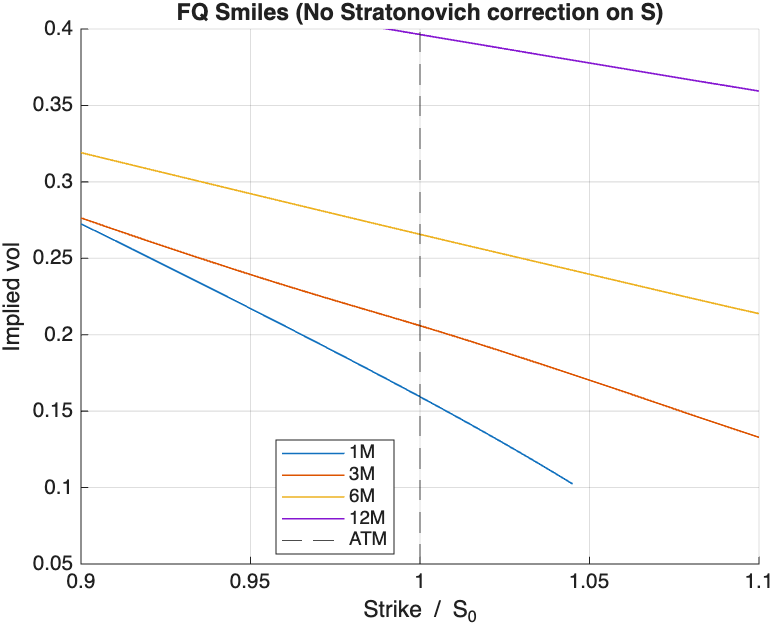}
\end{minipage}
    
    \caption{Comparison between implied volatility smiles and skews across a range of maturities, from one month up to twelve months, generated using the model parameters via product functional quantization ("QF smiles" on the top left of the panel), with those obtained from Monte Carlo simulation ("Monte Carlo Smiles", on the top right of the panel) and those obtained when the corrective term is omitted in the functional quantization of the asset price process $S$, while keeping the correctly quantized codewords for the volatility process $y$ ("FQ Smiles -- No Stratonovich correction on S" on the bottom of the panel. }
    \label{fig:Smiles_julien}
\end{figure}

\section{Application to the Model of  \cite{platrendek18}}\label{section5}

\subsection{Existence and Uniqueness  for the Volatility Process}

We now address the existence and uniqueness of strong solutions for the stochastic differential equation (SDE) defining the process~\eqref{Platen1} in the model of \cite{platrendek18}. Our approach follows the lines of Exercise 34 p.130 in   \cite{Lamberton97} and  leverages again Theorem 11.2 from \cite{rogerswill2000}, which has proven to be a powerful tool in establishing such results for SDE systems of the form~\eqref{R1dyn} and~\eqref{R2dyn}.

First of all fix an initial condition $y_0>0$ and parameters $\epsilon, A$ such that $0<\epsilon <y_0<A$. We define the truncation operator $T_{\epsilon, A}(y): t\rightarrow (y(t)\land A)\lor  \epsilon$ and 
\begin{align*}
b_{\epsilon, A}(t,y)&=b\left(t,T_{\epsilon, A}(y)_t,\int_0^te^{\lambda u}\sqrt{T_{\epsilon, A}(y)_u}du,0\right)\\
a_{\epsilon, A}(t,y)&=a\left(T_{\epsilon, A}(y)_t,\int_0^te^{\lambda u}\sqrt{T_{\epsilon, A}(y)_u}du\right).
\end{align*}
Consider the following functional SDE for $t\in[0,T],\ T>0$:
\begin{align}
Y_t^{\epsilon, A}&=y_0+\int_0^t b_{\epsilon, A}(s,Y_s^{\epsilon, A})ds+
\int_0^t a_{\epsilon, A}(s,Y_s^{\epsilon, A})dW_s.\label{functionalYPlaten}
\end{align}
As the map $u\rightarrow \sqrt{u}$ is $\frac{1}{\sqrt{\epsilon}}-$ Lipschitz on $[\epsilon, A]$, it is easy to check that the coefficients of~\eqref{functionalYPlaten} are locally Lipschitz, namely
\begin{align*}
\vert  b_{\epsilon, A}(t,x)-b_{\epsilon, A}(t,y)\vert +\vert a_{\epsilon, A}(t,x)-a_{\epsilon, A}(t,y) \vert
&\leq C_{\epsilon, A,T,\eta} \vert\vert x-y\|_{[0,T]}.
\end{align*}
We can then apply Theorem 11.2 in \cite{rogerswill2000} which proves the existence of a unique solution on $\mathbb{R}_+$ to the Equation~\eqref{functionalYPlaten}.
Now let introduce 
\begin{align*}
\tau_{\epsilon, A}:=\inf \{ t: Y_t^{\epsilon, A} \in \{ \epsilon, A\}\}>0\ \ \mbox{as}\ y_0\in ]\epsilon, A[.
\end{align*}
From the uniqueness of the solution, it is clear that if $0<\epsilon'<\epsilon <y_0<A<A'$ then 
$Y_{\tau_{\epsilon, A}}^{\epsilon', A'}=Y_{\tau_{\epsilon, A}}^{\epsilon, A}$.
In other terms, the solutions are telescopic, so that  one can define a process $\left( Y_t\right)_{t\in [0,\tau[}$, with $\tau=\sup_{0<\epsilon<A}\tau_{\epsilon, A}\leq +\infty$, where for all $t\in[0,\tau[$ (i.e. on $\{\tau>t\}$) we have
\begin{align}
Y_t&=y_0+\int_0^t b_{\epsilon, A}(s,Y_s)ds+
\int_0^t a_{\epsilon, A}(s,Y_s)dW_s.\label{functionalYPlaten2}
\end{align}
The next step is to show that $\tau =+\infty\  \mathbb{P}-a.s.$, in order to extend the solution to the entire positive real time line. To this aim, we need to have a closer look at the original model~\eqref{Platen1}:
\begin{align*}
  dY_t^{y_0}  &= (\alpha- \beta Y_t^{y_0}) M_t dt  + \sigma \sqrt{M_t Y_t^{y_0}}  dW_t, \qquad Y_0=y_0>0,\\
M_t & = \xi (\lambda^2 (2 \sqrt{Y_t}-Z_t)^2 + \eta  )\geq \xi\eta>0.
\end{align*}
Let consider the scale function
\begin{align*}
s(x)=\int_1^x e^{\frac{2\beta}{\sigma^2}u}u^{-\frac{2\alpha}{\sigma^2}} du,
\end{align*}
which verifies the following ODE
\begin{align*}
\frac{\sigma^2}{2}xs"(x)+(\alpha -\beta x)s'(x)=0.
\end{align*}
Note also that we can rewrite 
$\tau_{\epsilon, A}=\tau_{\epsilon}\land \tau_{ A}$, where for any  $ c\in \{\epsilon, A\}$ we have
\begin{align*}
\tau_c&=\inf \{ t>0: Y_t=c\}=\inf \{ t>0: Y_t^{\epsilon',A'}=c\},
\end{align*}
where $0<\epsilon'<\epsilon <y_0<A<A'$.
Therefore,
\begin{align}
s(Y_{t \land \tau_{\epsilon, A}})&=
\int_0^{t \land \tau_{\epsilon, A} }
\left(\left(\alpha -b Y_u\right) s'(Y_u) + \frac{\sigma^2}{2}Y_u s"(Y_s)\right)
M_u du+
\sigma \int_0^{t \land \tau_{\epsilon, A} } s'(Y_u)\sqrt{M_uY_u}dW_u\nonumber\\
&= 
\sigma \int_0^{t \land \tau_{\epsilon, A} } s'(Y_u^{\epsilon, A})\sqrt{M_u^{\epsilon, A}Y_u^{\epsilon, A}}dW_u,\label{sMartingale}
\end{align}
which is a true martingale  since the argument of the stochastic integral  is bounded  by a deterministic function for any $\alpha\geq \sigma^2/2$ (where we used  an obvious notation for $M_u^{\epsilon, A}$).
This proves that $s(Y_{t \land \tau_{\epsilon, A}})_{t\geq 0}$ is a true martingale on $\mathbb{R}_+$ (recall that $ \tau_{\epsilon, A}<\tau$)  centered on $s(y_0)$. On the other hand, one can verify that the \textcolor{black}{second moment} satisfies the following inequality
\begin{align*}
\mathbb{E}\left[s(Y^{y_0}_{t \land \tau_{\epsilon, A}})^2\right]&=
\mathbb{E}\left[\sigma^2 \int_0^{t \land \tau_{\epsilon, A} } s'(Y_u^{\epsilon, A})^2 M_u^{\epsilon, A}Y_u^{\epsilon, A}du\right]\geq \sigma^2 \xi\epsilon \eta \inf_{x\in [\epsilon, A]}(s'(x))^2 
\mathbb{E}\left[t \land \tau_{\epsilon, A}\right],
\end{align*}
where we note that $\inf_{x\in [\epsilon, A]}(s'(x))^2>0$. Now,
\begin{align*}
\mathbb{E}\left[t \land \tau_{\epsilon, A}\right]\leq
\frac{\mathbb{E}\left[s(Y^{y_0}_{t \land \tau_{\epsilon, A}})^2\right]}{ \sigma^2 \xi\epsilon \eta \inf_{x\in [\epsilon, A]}(s'(x))^2 
}\leq
\frac{C_{y_0,\epsilon, A, \eta}}{ \sigma^2 \xi\epsilon \eta \inf_{x\in [\epsilon, A]}(s'(x))^2 
},
\end{align*}
where the last inequality follows from the boundedness of $s(.)$ on $[\epsilon, A]$.
From Fatou's lemma, we get 
\begin{align*}
\mathbb{E}[\tau_{\epsilon, A}]\leq
\lim_{t\rightarrow +\infty}\mathbb{E}[t \land \tau_{\epsilon, A}]
\leq
\frac{C_{y_0,\epsilon, A, \eta}}{ \sigma^2 \xi\epsilon \eta \inf_{x\in [\epsilon, A]}(s'(x))^2 
}<+\infty,
\end{align*}
i.e. $ \tau_{\epsilon, A}<+\infty$ $\mathbb{P}-a.s.$
We now take the expected value in~\eqref{sMartingale} and for $t\rightarrow +\infty$ we get
\begin{align*}
s(y_0)=&\lim_{t\rightarrow +\infty}
\textcolor{black}{
\mathbb{E}\left[s(Y^{y_0}_{t \land \tau_{\epsilon, A} })\right]=
\mathbb{E}\left[s(Y^{y_0}_{ \tau_{\epsilon, A} })\right],
}
\end{align*}
i.e. 
\begin{align}
s(y_0)=&s(\epsilon)\mathbb{P}(\tau_\epsilon<\tau_A)+s(A)\mathbb{P}(\tau_\epsilon>\tau_A).\label{sy0}
\end{align}
Note that under the Feller assumption $\alpha\geq \sigma^2/2$ we have $\lim_{x\rightarrow 0}s(x)=-\infty$, and $\tau_\epsilon \uparrow \tau_0:=\inf\{t:Y_t=0\}$ as $\epsilon \downarrow 0$, so that 
$\mathbb{P}(\tau_0<\tau_A)\leq \mathbb{P}(\tau_\epsilon<\tau_A)$ for all $\epsilon \in]0,y_0[$. Moreover, under the (reasonable) assumption that $0<\epsilon<1$, we have $s(\epsilon)<0$, so that
\begin{align*}
\mathbb{P}(\tau_\epsilon<\tau_A)&=\frac{s(y_0)}{s(\epsilon)}+\frac{s(A)}{(-s(\epsilon))}\mathbb{P}(\tau_\epsilon>\tau_A)\leq \frac{\vert s(y_0)\vert} {\vert s(\epsilon)\vert }+\frac{\vert s(A)\vert }{\vert s(\epsilon)\vert }\times 1.
\end{align*}
Now taking $\epsilon \rightarrow 0$, it follows that
\begin{align*}
\mathbb{P}(\tau_0<\tau_A)&=\lim_{\epsilon \rightarrow 0}\mathbb{P}(\tau_\epsilon<\tau_A)\leq \frac{\vert s(y_0)\vert+\vert s(A)\vert }{+\infty }=0,
\end{align*}
i.e. $\mathbb{P}(\tau_0<\tau_A)=0$ $  \forall A>0$, therefore  $\tau_0\geq \tau_A$ $\mathbb{P}-a.s.$, then $Y_t>0$ on $[0,\tau_A[$ $\forall A>0$, i.e. $Y_t=Y^{y_0}_{t \land \tau_{ A} }>0$ on $\cup\uparrow _{A>0}[0,\tau_A[=[0,\tau[$.
We have only to conclude for $\tau_A$. We restart from~\eqref{sy0} and we note that for $\epsilon<y_0 \land 1$
\begin{align*}
0\leq \mathbb{P}(\tau_A<\tau_\epsilon)= \frac{ s(y_0)+(-s(\epsilon))\mathbb{P}(\tau_\epsilon<\tau_A)}{s(A)}\leq \frac{ s(y_0)-s(\epsilon))}{s(A)}, 
\end{align*}
i.e.  (note that $\beta\geq 0$)
\begin{align*}
\lim_{A \rightarrow +\infty}\mathbb{P}(\tau_A<\tau_\epsilon)=\frac{ s(y_0)-s(\epsilon))}{+\infty}=0,
\end{align*}
and, as $\tau=\sup_{A>0}\tau_A$, 
\begin{align*}
\mathbb{P}(\tau<\tau_\epsilon)=0,\ \forall \epsilon \in [0,y_0\land 1[.
\end{align*}
Now since  $\tau_\epsilon \uparrow +\infty$ for $\epsilon \downarrow 0$, we get 
\begin{align*}
\mathbb{P}(\tau<+\infty)=\lim_{\epsilon \rightarrow 0}\mathbb{P}(\tau<\tau_\epsilon)=0,
\end{align*}
 that is $\tau=+\infty \ \mathbb{P}-a.s.$, which ensures the existence (and also the uniqueness) of a strong solution of the volatility process of the model.

\subsection{Lamperti Transform in the Model of \cite{platrendek18}}

From~\eqref{aPlaten}, the Lamperti transform~\eqref{lampertiytildey}  in the model of \cite{platrendek18} is given by
\begin{align*}
S(t,y,\tilde{y}^h) &= \int_{\epsilon_0}^y \frac{dx}{a(t,x, \tilde{y}^h)}\\
&= \int_{\epsilon_0}^y \frac{dx}{\sigma \sqrt{\xi x} \sqrt{ 4\lambda^2 \left( \sqrt{x} - \lambda e^{-\lambda t} \tilde{y}^h \right)^2 + \eta}}\\
&=\textcolor{black}{\left[    
\frac{1}{\sigma \lambda\sqrt{\xi}} 
  \text{ ArcTanh}\left(\frac{2 \lambda \left( \sqrt{x} - \lambda e^{-\lambda t} \tilde{y}^h \right)}
  {\sqrt{ 4\lambda^2 \left( \sqrt{x} - \lambda e^{-\lambda t} \tilde{y}^h \right)^2 + \eta}}
\right)
\right]_{\epsilon_0}^y,}
\end{align*}
which is expressed in terms of the  inverse hyperbolic tangent function. 
Taking the inverse w.r.t.  $y$, gives 
\textcolor{black}{\begin{align*}
S^{-1}_y(t,x,\tilde{y}^h)&=
\left(
\lambda e^{-\lambda t}(\tilde y^h)^2+\frac{\sqrt{\eta} \text{Tanh} \left(\frac{x+c}{\sigma\lambda \sqrt{\xi}}\right)}{2\lambda \sqrt{1-\text{Tanh}^2 \left(\frac{x+c}{\sigma\lambda \sqrt{\xi}}\right)}}\right)^2,
\end{align*}}
\textcolor{black}{where we denoted with  $c$ the previous integral evaluated at the lower extremal $\epsilon_0$.}


%
%

Now let consider the derivative of the Lamperti transform w.r.t. the variables $\tilde{y}^h$ and $t$:
 \textcolor{black}{
 \begin{align*}
\frac{\partial S}{\partial \tilde y^h}(t,y,\tilde y^h)&=
-\frac{2\lambda e^{-\lambda t}}{\sigma \sqrt{\xi}}
\left(
\frac{1}{\sqrt{4\lambda^2 \left( \sqrt{y} - \lambda e^{-\lambda t} \tilde{y}^h \right)^2 + \eta}}
-
\frac{1}{\sqrt{4\lambda^2 \left( \sqrt{\epsilon_0} - \lambda e^{-\lambda t} \tilde{y}^h \right)^2 + \eta}}
\right)\\
\frac{\partial S}{\partial t}(t,y,\tilde y^h)&=
-2\lambda^2 e^{-\lambda t}(\tilde y^h)^2\left(
\lambda e^{-\lambda t}(\tilde y^h)^2+\frac{\sqrt{\eta} \text{Tanh} \left(\frac{x+c}{\sigma\lambda \sqrt{\xi}}\right)}{2\lambda \sqrt{1-\text{Tanh}^2 \left(\frac{x+c}{\sigma\lambda \sqrt{\xi}}\right)}}\right).
\end{align*}
}
We have then all the ingredients in order to write the  ODE satisfied by the functional codewords $x_t$ of the quantizer for the process $X_t$:
 \begin{align*}
d x_t=&\left(
\frac{b(t,S^{-1}_y\textcolor{black}{(t,x,\tilde{y}^h)},\tilde y_t^{g_1},\tilde y_t^{g_2})}{
a (t,S^{-1}_y\textcolor{black}{(t,x,\tilde{y}^h)}, \tilde y^h_t)}
+\frac{\partial S}{\partial \tilde y^h}(t,S^{-1}_y\textcolor{black}{(t,x,\tilde{y}^h)},\tilde y^h_t)h(t,S^{-1}_y\textcolor{black}{(t,x,\tilde{y}^h)})\right. \\
 &\left. 
+\frac{\partial S}{\partial t}(t,S^{-1}_y\textcolor{black}{(t,x,\tilde{y}^h)},\tilde y^h_t)
-\frac{1}{2} \frac{\partial a}{\partial y}(t,S^{-1}_y\textcolor{black}{(t,x,\tilde{y}^h)}, \tilde y^h_t)\right)dt
+\alpha'(t)dt,
\end{align*}
where $a,b$ are given by \eqref{aPlaten}, \eqref{bPlaten} and (recall that here $g_2=0$)
\begin{align*}
\tilde y_t^{g_1}&=\int_{0}^t g_1(u,y_u)du=\int_{0}^t e^{\lambda u} \sqrt{y_u}du =\int_{0}^t h(u,y_u)du=\tilde y_t^{h},\\
\tilde y_t^{g_2}&=0.
\end{align*}

\subsection{ODEs for the Functional Codewords of the Quantizer for the Process $Y$}

From~\eqref{aPlaten}, we have 
\begin{align}
\frac{\partial a}{\partial y}(t,y, \tilde y^h)&=
\frac{\sigma \sqrt{\xi}}{2\sqrt{y}}\left(
\frac{\eta + 8y\lambda^2 -12\lambda^3e^{-\lambda t}\sqrt{y}\tilde y^h + 4\lambda^4 e^{-2\lambda t}(\tilde y^h)^2}{\sqrt{ 4\lambda^2 \left( \sqrt{y} - \lambda e^{-\lambda t} \tilde{y}^h \right)^2 + \eta}}\right),
\end{align}
so that
\begin{align}
a (t,y, \tilde y^h)\frac{\partial a}{\partial y}(t,y, \tilde y^h)&=
\frac{\sigma^2\xi}{2} \left(\eta + 8y\lambda^2 -12\lambda^3e^{-\lambda t}\sqrt{y}\tilde y^h + 4\lambda^4 e^{-2\lambda t}(\tilde y^h)^2)\right).
\end{align}

The ODE satisfied by the functional codewords $y_t$ of the quantizer for the process $Y$ is then 
\begin{align*}
d y_t&= \left(b(t,y_t,\tilde y_t^{g_1},\tilde y_t^{g_2})-\frac{1}{2}a (t,y_t, \tilde y_t^h)\frac{\partial a}{\partial y}(t,y, \tilde y^h_t)\right)dt
+a (t,y_t, \tilde y_t^h)\alpha'(t)dt\\
&=  \xi (\alpha - \beta y_t) \left(4 \lambda^2 \left( \sqrt{y_t} - \lambda e^{-\lambda t} \tilde y_t^{g_1} \right)^2 + \eta \right)dt\\
& \quad -\frac{\sigma^2\xi}{4} \left(\eta + 8y_t\lambda^2 -12\lambda^3e^{-\lambda t}\sqrt{y_t}\tilde y_t^h + 4\lambda^4 e^{-2\lambda t}(\tilde y_t^h)^2\right)dt\\
&\quad +\sigma \sqrt{\xi y_t} \sqrt{ 4\lambda^2 \left( \sqrt{y_t} - \lambda e^{-\lambda t} \tilde{y_t}^h \right)^2 + \eta}
\alpha'(t)dt.
\end{align*}

\subsection{Functional Quantization for the Growth Optimal Portfolio Process}

From~\eqref{modelPlaten} we get
\[
S_t  = S_0 \exp\left( \int_0^t \Big(r_s + \frac{M_s}{  2 Y_s}\Big) ds  +   \int_0^t  \sqrt{\frac{M_s}{Y_s}} dW_s  \right).
\]
Now we apply the product functional quantization as in \eqref{sellami} to  the stochastic integral 
\begin{align*}
\tilde Y^{f}_t :
= \int_0^t  \sqrt{\frac{M_s}{Y_s}} dW_s= \int_0^t f\left(s,Y_s,\tilde Y^h_s\right)  dW_s,
\end{align*}
where $\tilde Y^h_t=\int_0^t h(u,Y_u)du$ with $h(t,y)$ given in \eqref{hPlaten} (i.e.,  $h(t,y)=\sqrt{y}e^{\lambda t}$)
and 
\begin{align}
f\left(t,Y_t,\tilde Y^h_t \right)&= \sqrt{\frac{M_t}{Y_t}}=  \sqrt{\frac{\xi  \left( 4\lambda^2 \left( \sqrt{Y_t} - \lambda e^{-\lambda t} \tilde Y^h_t \right)^2 + \eta \right)}{Y_t}}.\label{fPlaten}
\end{align}
From 
\begin{align*}
\tilde Y^{f}_t = \int_0^t f\left(u,Y_u,\tilde{Y}_u^h\right) \circ dW_u
 - \frac{1}{2}\int_0^t a(u,Y_u, \tilde{Y}_u^h)\frac{\partial f}{\partial y}\left(u,Y_u,\tilde{Y}_u^h\right)du,\end{align*}
 we apply \eqref{sellami} to
the Stratonovich  integral and it follows that   the functional codeword of the above  It\^o stochastic integral $\tilde Y^{f}_t$ is given  by
\begin{align*}
\tilde y^{f}_t = \int_0^t f\left(u,y_u,\tilde y^h_u\right)\alpha'(u)du- \frac{1}{2}\int_0^t a (u,y_u,\tilde{y}_u^{h})\frac{\partial f}{\partial  y}\left(u,y_u,\tilde y^h_u\right)du,
\end{align*}
where $a$ is given by \eqref{aPlaten} and $\alpha$ is the usual codeword of a quantizer of $W$.
Finally, we can approximate the GOP process $S_t$, for $0=s_0<s_1<\cdots <s_n=t$ ($s_i=i\Delta; i=0,\cdots, n; \Delta=t/n$), for any $n\ge 1$, by 
{\small 
\begin{eqnarray}
\hat S_t  &=& S_0 \exp\left( \Delta  \sum_{i=1}^{n} \left( \Big(r_{s_i} + \frac{\hat {\bar M}_{s_i}}{  2 \hat {\bar Y}_{s_i}}
- \frac{1}{2} a (s_i,\hat {\bar Y}_{s_i},\hat {\bar {\tilde Y}}_{s_i}^{h})\frac{\partial f}{\partial  y}\left(s_i,\hat {\bar Y}_{s_i},\hat {\bar {\tilde Y}}_{s_i}^{h}\right)
\Big)  \right.\right. \nonumber\\
&& \left.\left. +   \sum_{\ell=1}^{d_N}  \sqrt{\frac{2}{t}}\ \hat\xi_{\ell}^{N_{\ell}}  \cos \left( \frac{s_i}{\sqrt{\lambda_{\ell}}}\right) \sqrt{\hat {\bar M}_{s_i} /  \hat {\bar Y}_{s_i}} \right) \right) \nonumber\\
& = & S_0 \exp\left( \Delta  \sum_{i=1}^{n} \left( \Big(r_{s_i} + \frac{\hat {\bar M}_{s_i}}{  2 \hat {\bar Y}_{s_i}}
- \frac{1}{2} a (\hat {\bar Y}_{s_i},\hat {\bar {\tilde Y}}_{s_i}^{h})\frac{\partial f}{\partial  y}\left(s_i,\hat {\bar Y}_{s_i},\hat {\bar {\tilde Y}}_{s_i}^{h}\right)
\Big)   +     \sqrt{\frac{2}{t}}\ \sqrt{\hat {\bar M}_{s_i} /  \hat {\bar Y}_{s_i}} \  \hat \xi_{s_i} \right) \right)
 \label{GOPquantized}\end{eqnarray}
}
where
\begin{eqnarray*}
\hat \xi_{s_i} & =& \sum_{\ell=1}^{d_N} \hat\xi_{\ell}^{N_{\ell}} \cos \left( \frac{s_i}{\sqrt{\lambda_{\ell}}}\right), \\
 \hat {\bar M}_{s_i} & =& \xi \Big(\lambda^2 \Big(2 \hat {\bar Y}_{s_i}^{^{1/2}}-  2 \lambda  \sum_{j=0}^i  e^{-\lambda(t-s_j)}  \hat{\bar Y}_{s_j}^{^{1 / 2}}   \Big)^2 + \eta  \Big),\\
\hat {\bar {\tilde Y}}_{s_i}^{h}&=&\sum_{j=0}^i  e^{\lambda s_j}  \hat{\bar Y}_{s_j}^{^{1 / 2}}.
\end{eqnarray*}
Finally,  from \eqref{aPlaten} and \eqref{fPlaten}, it follows that 
\begin{align*}
 a (t,y_t,\tilde{y}_t^{h})\frac{\partial f}{\partial  y}\left(t,y_t,\tilde y^h_t\right)=
 \frac{\sigma \xi}{2y_t}\left(
 4\lambda^3 e^{-\lambda t}\sqrt{y_t}\tilde y^h_t-4\lambda^4 e^{-2\lambda t}(\tilde y^h_t)^2-\eta\right).
\end{align*}

\subsection{Pricing a Zero Coupon Bond Under the Benchmark Approach}

From Formula \eqref{benchmarkZCB} giving the price of a ZCB under the benchmark approach, it follows that we need to quantize the inverse of the GOP. From \eqref{GOPquantized} we get that $\hat S_t^{-1}$ can be approximated recursively by
\[
\hat S_0^{-1} = S_0^{-1}, \qquad \hat S_{s_{i}} ^{-1}= \hat S_{s_{i-1}}^{-1}   \exp\Big(- \Delta   \textcolor{black}{\left( \mbox{drift}_{s_i}   +     \sqrt{2 / t}\    \hat \xi_{s_i}\mbox{vol}_{s_i}  \right)} \Big)
\]
where  for $i=0,\cdots, n$:
\begin{eqnarray*}
\mbox{drift}_{s_i }&=&
 r_{s_i } + \frac{\hat {\bar M}_{s_i}}{  2 \hat {\bar Y}_{s_i}}
- \frac{1}{2} a (\hat {\bar Y}_{s_i},\hat {\bar {\tilde Y}}_{s_i}^{h})\frac{\partial f}{\partial  y}\left(s_i,\hat {\bar Y}_{s_i},\hat {\bar {\tilde Y}}_{s_i}^{h}\right)\\
&=& r_{s_i } + \frac{\hat {\bar M}_{s_i}}{  2 \hat {\bar Y}_{s_i}}
-  \frac{\sigma \xi}{4\hat {\bar Y}_{s_i}}\left(
 4\lambda^3 e^{-\lambda s_i}\sqrt{\hat {\bar Y}_{s_i}}\hat {\bar {\tilde Y}}_{s_i}^{h}-4\lambda^4 e^{-2\lambda s_i}(\hat {\bar {\tilde Y}}_{s_i}^{h})^2-\eta\right).
\\
 \mbox{vol}_{s_i }& =& \sqrt{\hat {\bar M}_{s_i} /  \hat {\bar Y}_{s_i}}
\end{eqnarray*}
  We may also write 
\begin{equation}
\hat S_t^{-1} =  \hat S_{s_{n}}^{-1} = S_0^{-1} \prod_{i=1}^n    \exp\Big(- \Delta   \big( \mbox{drift}_{s_i}    +     \sqrt{2 / t}\ \hat \xi_{s_i} \  \mbox{vol}_{s_i}  \big) \Big).
\end{equation}
Since for any $s \ge 0$, $\hat {\bar M}_s$ and $\hat{\bar Y}_s$ are functions of the random variable $\hat \xi_s$, we deduce that $\hat S_t^{-1}$ can be written as a function of the random variable $\hat \xi:= \sum_{\ell=1}^{d_N} \hat\xi_{\ell}^{N_{\ell}}$: 
\[
\hat S_t^{-1} = \Psi(\hat \xi), \ \mbox{ with } \   \Psi(\mathbf{x}) := S_0^{-1} \prod_{i=1}^n    \exp\Big(- \Delta   \big( \mbox{drift}_{s_i}    +     \sqrt{2 / t}\ \textcolor{black}{ \mathbf{ x}_s\mathds{1}_{ s=s_i}} \ \cos \left( \frac{s_i}{\sqrt{\lambda_{\ell}}}\right)  \  \mbox{vol}_{s_i}  \big) \Big).
\]
It follows that the (product functional quantization based) price of the zero coupon bond is given by
{\small 
\begin{eqnarray}
\mathbb E \big(\hat S_t^{-1} \big)   &=& S_0\sum_{j=1}^N   \Psi\Big(\sum_{\ell=1}^{d_N} x_{j_{\ell}} \Big)\  \mathbb P\Big(\hat \xi = \sum_{\ell=1}^{d_N} x_{j_{\ell}}\Big) \nonumber\\
&=&     \sum_{j=1}^N  \prod_{i=1}^n    \exp\Big(- \Delta   \big( \mbox{drift}_{s_i}    +     \sqrt{2 / t}\  \mbox{vol}_{s_i}   \sum_{\ell=1}^{d_N} x_{j_{\ell}}  \big) \Big) \prod_{\ell=1}^{d_N} \big( \Phi_0\big(x^{\ell}_{j_{\ell}+} \big)  - \Phi_0\big(x^{\ell}_{j_{\ell}-} \big) \big),\label{ZCBond}
\end{eqnarray}
}
where (see e.g. \cite{pagesprintem05}) $\Phi_0(.)$  is the cdf of the ${\cal N}(0; 1)$ and, for $i_\ell =1,\dots ,N_\ell$,
\begin{equation*}
x^{\ell}_{j_{\ell}+} = \frac{x^{\ell}_{j_{\ell}}+x^{\ell}_{j_{\ell}+1}}{2};  \  \ 
x^{\ell}_{j_{\ell}- }= \frac{x^{\ell}_{j_{\ell}}+x^{\ell}_{j_{\ell}-1}}{2}\ \  \hbox{with}\ \ 
x^\ell_{1-} = -\infty, \  \  x^\ell_{N_\ell +} = +\infty.
\end{equation*}
%
%

\subsection{Numerical Illustration for the Model of \cite{platrendek18}}

In this subsection, we provide a numerical illustration of the product functional quantization framework applied to the model introduced in \cite{platrendek18}, together with a set of numerical experiments related to the pricing of simple financial instruments, namely a zero-coupon bond and a call option written on the GOP. A comprehensive numerical assessment of the full methodology is beyond the scope of the present work and is therefore deferred to future research.

Let us consider the SDEs \eqref{modelPlaten} and \eqref{Platen1}, which describe the stochastic dynamics of the GOP. In order to  simulating the process $Z$, we adopt the following equivalent SDE formulation,
\begin{equation*}
dZ_t=-\lambda Z_tdt +2\lambda \sqrt{Y_t}dt,\ Z_0=0,
\end{equation*}
which proves to be numerically more stable. Indeed, the model is prone to potential instabilities and therefore requires particular care in the choice of the parameter values.

Indeed, in order to achieve a satisfactory fit of the monthly and daily dynamics observed for the S\&P 500,  \cite{platrendek18} originally proposed the parameter set
\begin{equation*}
\alpha=1,\quad \beta=1,\quad  \sigma =1,\quad  \xi=0.15,\quad \eta=0.000314, \quad \lambda=8.
\end{equation*}

However, this calibration turns out to be problematic from a numerical and modeling standpoint. In fact, it is readily observed that relatively large values of $\lambda$ -- in practice, values exceeding unity are already sufficient -- lead to pronounced instabilities in the model. In particular, the price process tends to exhibit explosive behavior regardless of the values assigned to the remaining parameters. Moreover, for large values of $\lambda$, the volatility factor $Y$ in \eqref{Platen1} rapidly converges to an asymptotic constant, effectively reducing the model to a Black-Scholes-type framework with constant volatility. Such a regime carries limited financial relevance, as it fails to reproduce key market stylized facts, including the implied volatility smile and skew.

For these reasons, in the present numerical study we depart from the parameter values adopted in \cite{platrendek18} and instead consider the following set, which allows us to work in a more stable environment\footnote{The purpose of this section is not to advocate the financial relevance of our parameter choice relative to that of \cite{platrendek18}, but to carry out a purely numerical investigation. Issues of estimation and calibration to real market data will be addressed in subsequent work.}:
\begin{equation*}
\alpha=1,\quad \beta=1,\quad  \sigma =1,\quad  \xi=0.05,\quad \eta=0.002, \quad \lambda=0.2,
\end{equation*}
together with the initial conditions $Y_0=0.1$ and we consider $S_0=100$ with a risk-free rate $r=3\%$.
We retain  $T=1$ with $500$ steps for the time discretization. For the spatial discretization, i.e.  for the functional quantization of the Brownian motion, we employ the same structural setup as in the previous section, with $d_N=20$ and 
\begin{equation*}
N_1=30,\quad N_2=14,\quad N_3=6,\quad N_4=3,\quad N_5=2,
\end{equation*}
corresponding to a total of $N=15,120$ quantized trajectories. In this case, we deliberately restrict ourselves to a more parsimonious representation of the (infinite-dimensional) set of Brownian paths. As discussed earlier, this choice inherently depends on the nature of the problem under consideration and on the characteristics of the underlying model. Given the flexibility and numerical stability exhibited by the model under the selected parameter configuration, the chosen value of $N$ represents a suitable compromise between computational efficiency and approximation accuracy for our purposes.

We begin by visualizing a selection of sample paths of the processes $S$ and $y$, while consistently keeping the Monte Carlo simulation as a reference benchmark for comparison.

In Figure \ref{fig:PlatenYS_FQ_and_MC}  we display $100$ trajectories  of the processes $Y$ and $S$ quantized by product functional quantization (left), and simulated by a Monte Carlo simulation procedure (right) using a simple Euler scheme. One can observe the smooth trajectories produced by the quantization approach in contrast with the more "chaotic" Monte Carlo paths; moreover, the dispersion appears to be comparable. Keep in mind that the quantized paths are not equally (or identically) weighted.

\begin{figure}[!htbp]
    \centering
\includegraphics[width=0.45\linewidth]{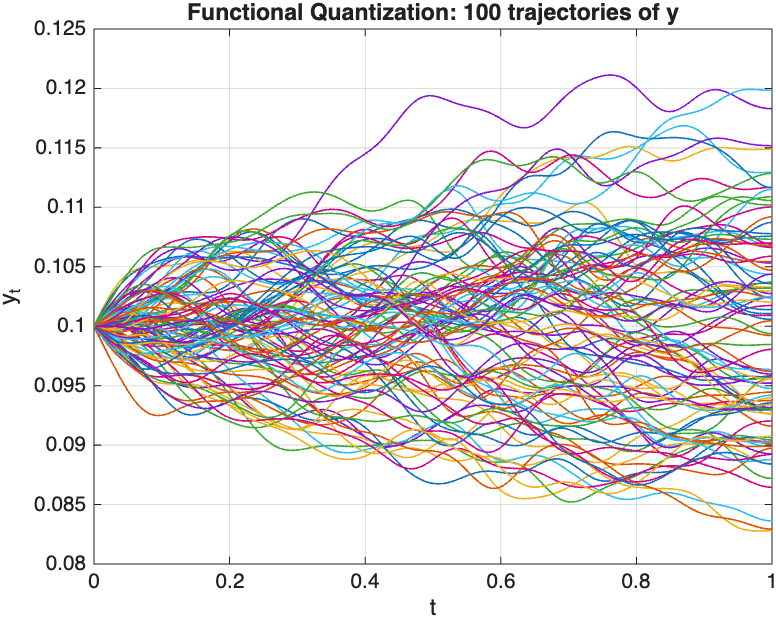} 
    \includegraphics[width=0.45\linewidth]{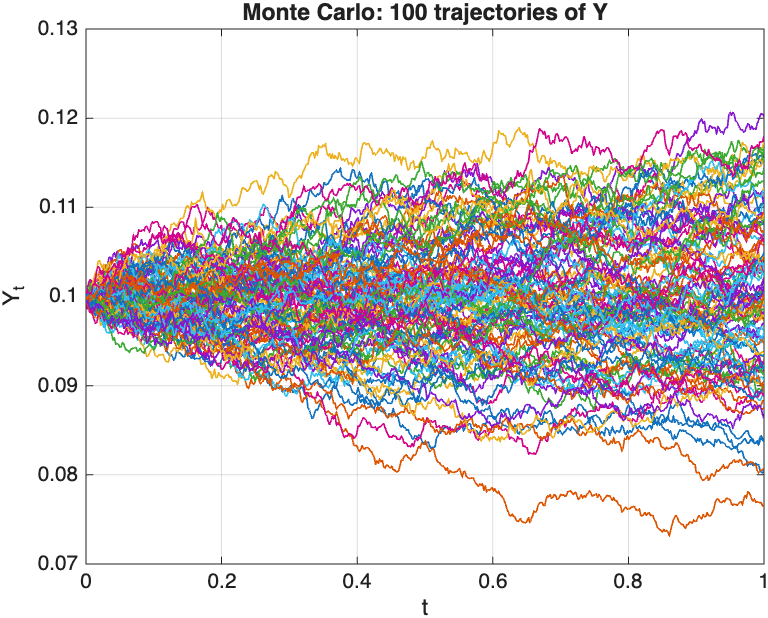}
\includegraphics[width=0.45\linewidth]{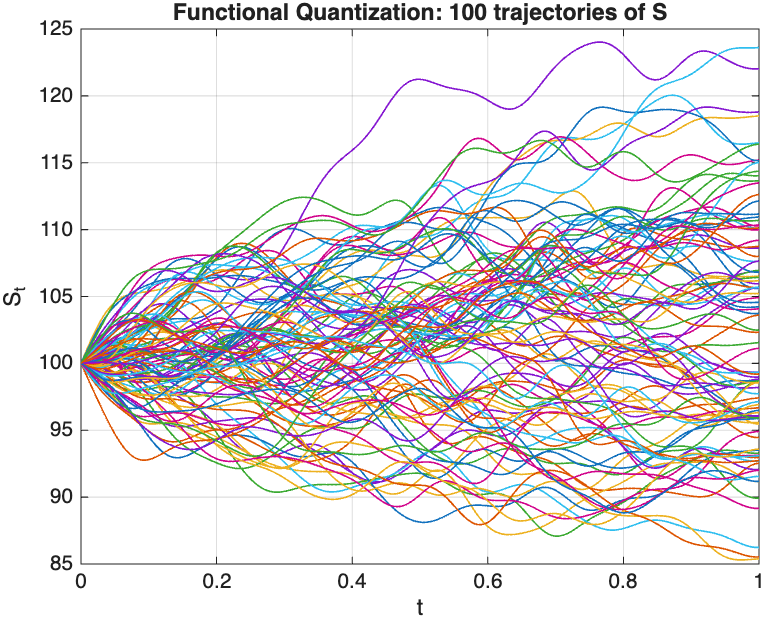} 
    \includegraphics[width=0.45\linewidth]{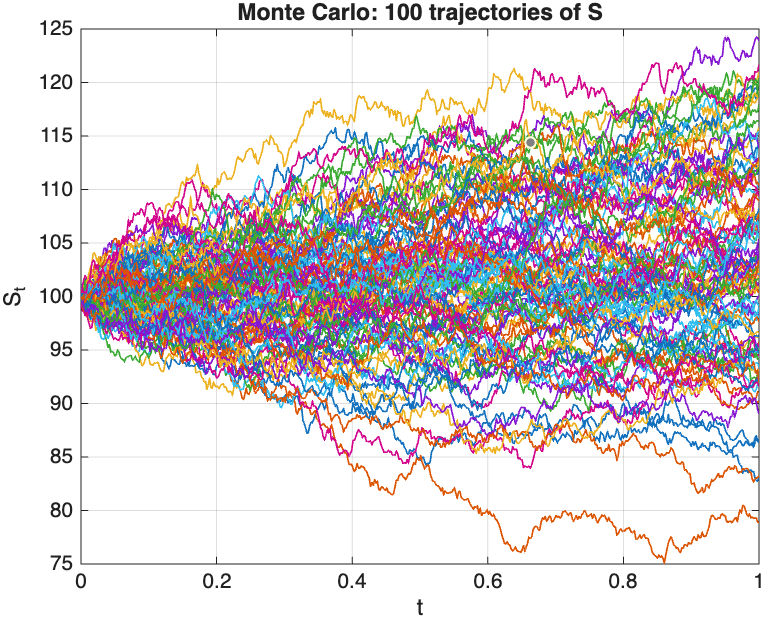}
    \caption{One hundred sample paths of the processes $y$ and $\hat S$ obtained via functional quantization (left panels) and those generated by Monte Carlo simulation with $10^5$ paths. The model parameters are $\alpha=\beta=  \sigma =1,   \xi=0.05,  \eta=0.002,  \lambda=0.2$.  $S_0=100;\ r=0;\ T=1$ with $500$ steps for the time discretization. }
    \label{fig:PlatenYS_FQ_and_MC}
\end{figure}

Before turning to the pricing results, we briefly comment on the impact of the model parameters on the behavior of the underlying asset.

The parameters $\alpha, \beta$ and $ \sigma$  enter the dynamics of $S$ only indirectly through the volatility factor $Y$, as a result, variations in these parameters have a limited impact on the fluctuations of the asset price. For the sake of brevity, we omit a detailed sensitivity analysis of the sample paths of the processes $S$ and $Y$ with respect to these model parameters, which is analogous to the study conducted in the previous section. A similar observation applies to $\xi$ and $\eta$: however, this holds only because we operate in a regime characterized by moderate values of $\lambda$ which emerges as the dominant parameter of the model. Indeed, $\lambda$ plays a central role both from an economic and financial perspective -- being associated with market activity and responsible for the memory effect in this non-Markovian framework -- and from a numerical standpoint, as it has a substantial impact on the trajectories of both $y$ and $S$.

In particular, Figure~\ref{fig:Platen_impact_Y_lambda} illustrates the effect of varying $\lambda$ on the trajectories of the process $y$, as obtained via functional quantization and Monte Carlo simulation. Consistent with the previous discussion, for high values of $\lambda$, each path of the process $Y$ becomes effectively flat, thereby reducing the model to a Black-Scholes-type  constant volatility setting. Such a regime is unable to reproduce the implied volatility smile and therefore lacks the flexibility required to capture key market features. On the contrary, with our choice $\lambda=0.2$, the model exhibits a markedly improved level of stability and, in particular, does not display significant distortions in the distribution of either the underlying asset or the volatility process for reasonable variations of $\xi, \eta, \alpha, \beta, \sigma$, as well as of the initial condition $Y_0$.

\begin{figure}[!htbp]
\centering

\begin{minipage}{0.48\linewidth}
\centering
\includegraphics[width=\linewidth]{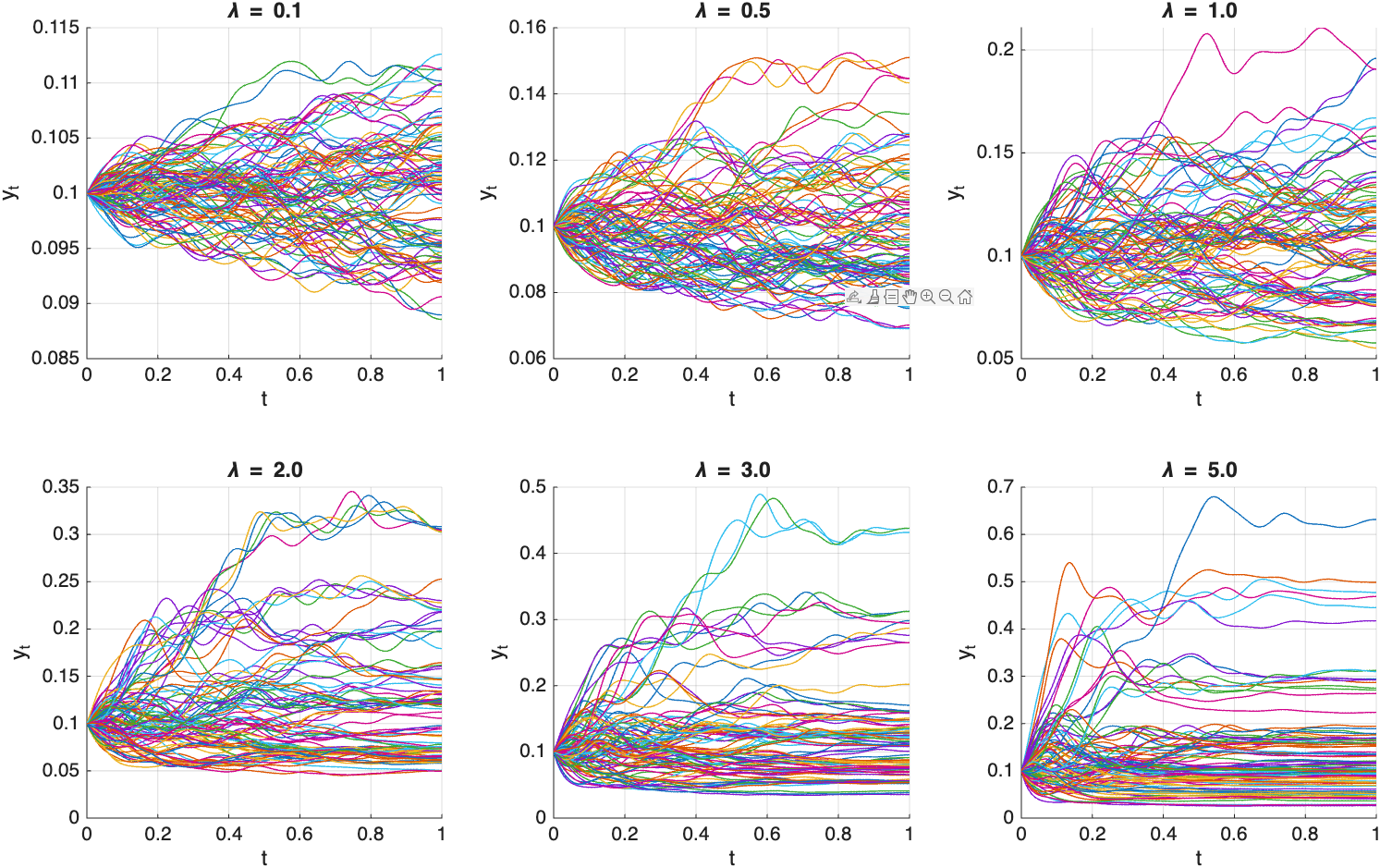}
\subcaption{Functional Quantization}
\end{minipage}
\hfill
\begin{minipage}{0.48\linewidth}
\centering
\includegraphics[width=\linewidth]{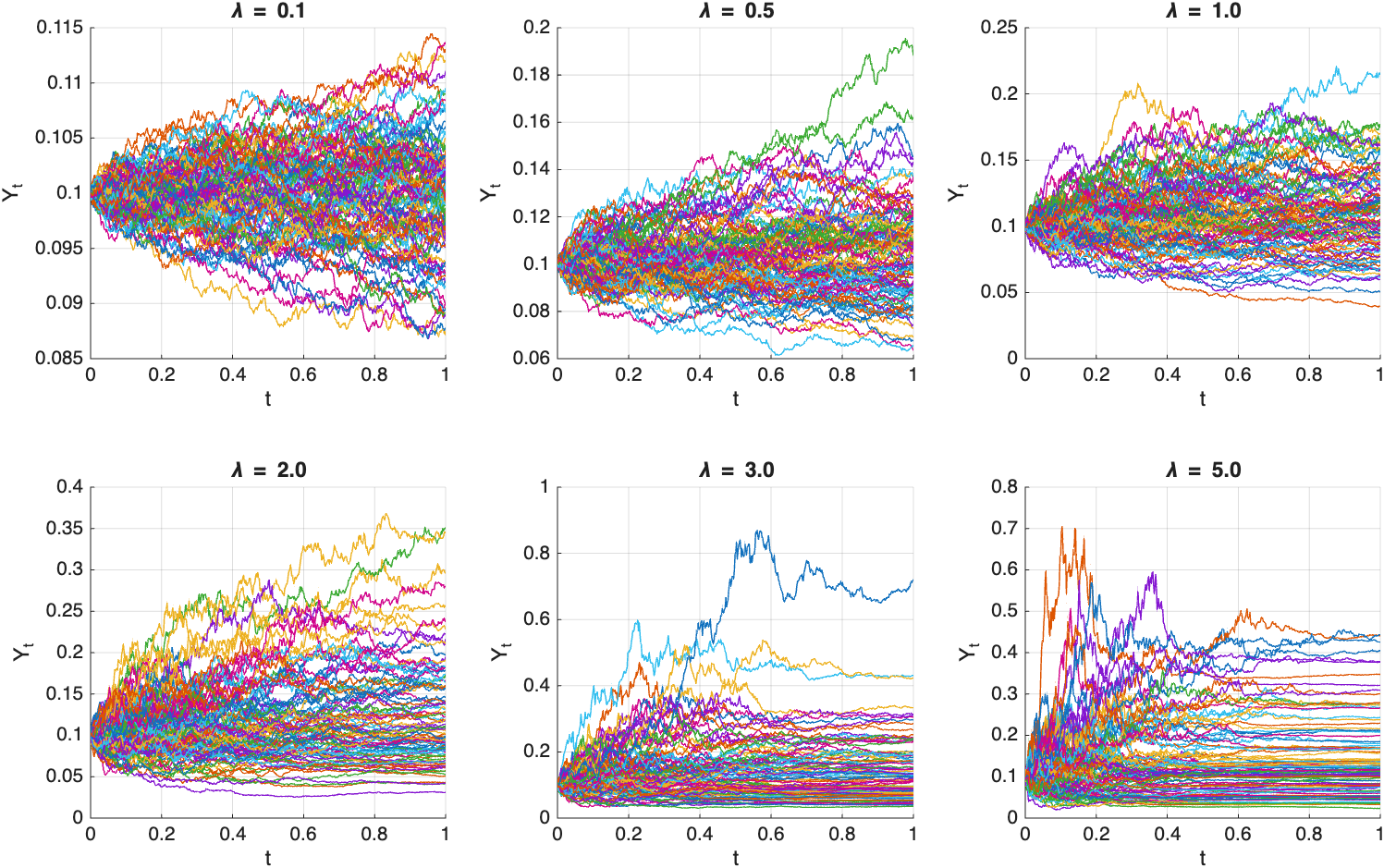}
\subcaption{Monte Carlo simulation}
\end{minipage}
    \caption{One hundred sample paths of the (codewords for the) process $Y$ corresponding to different values of $\lambda=0.1,\ 0.5,\ 1,\ 2,\ 3,\ 5$, with trajectories obtained via functional quantization (a)  and those generated by Monte Carlo simulation (b). }
    \label{fig:Platen_impact_Y_lambda}
\end{figure}

Finally, Figure~\ref{fig:Platen_impact_S_lambda} highlights the increasingly explosive behavior of the process $\hat S$ as the parameter $\lambda$ grows, ultimately leading to pronounced model instability.

\begin{figure}[!htbp]
\centering

\begin{minipage}{0.48\linewidth}
\centering
\includegraphics[width=\linewidth]{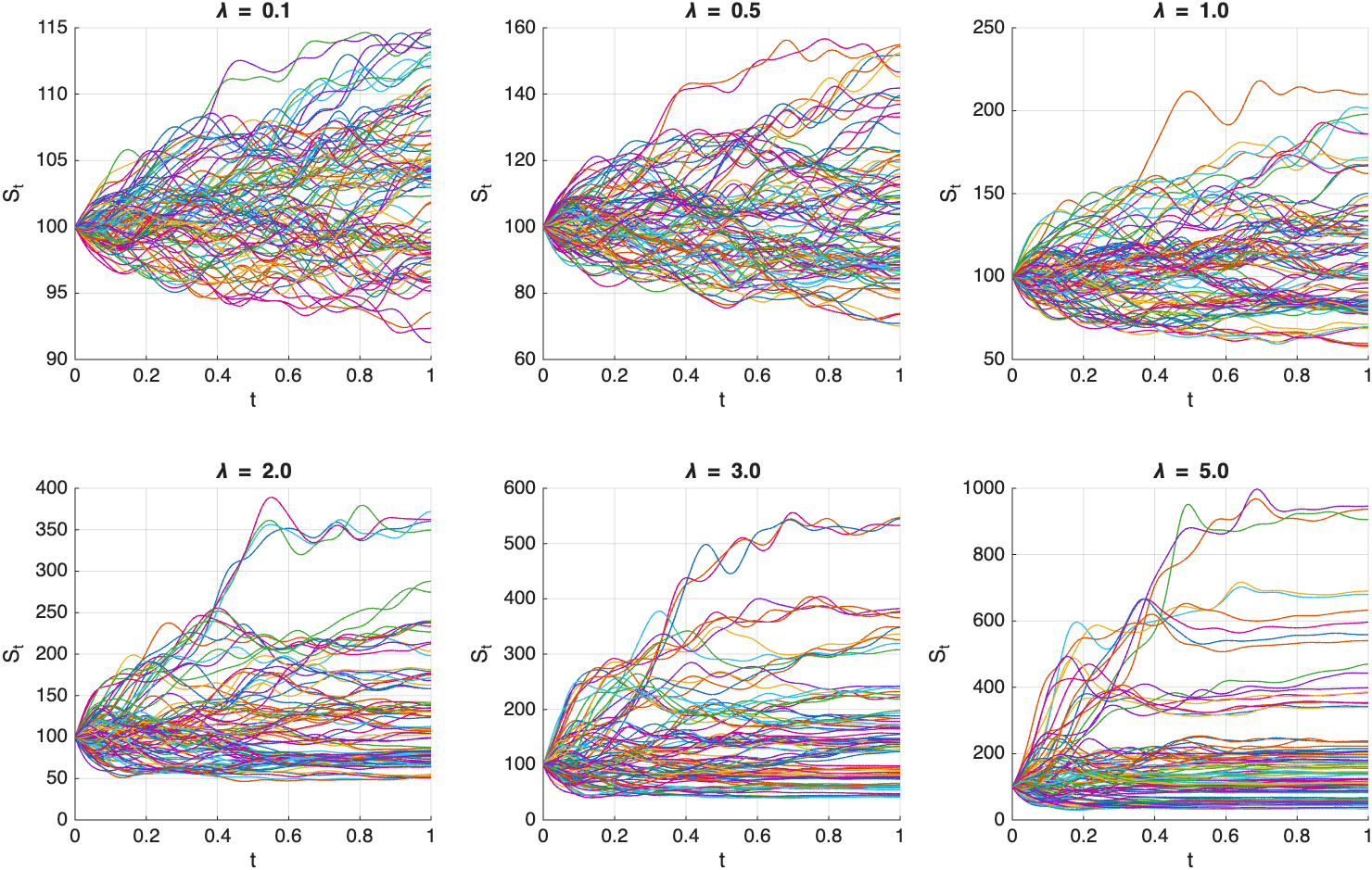}
\subcaption{Functional Quantization}
\end{minipage}
\hfill
\begin{minipage}{0.48\linewidth}
\centering
\includegraphics[width=\linewidth]{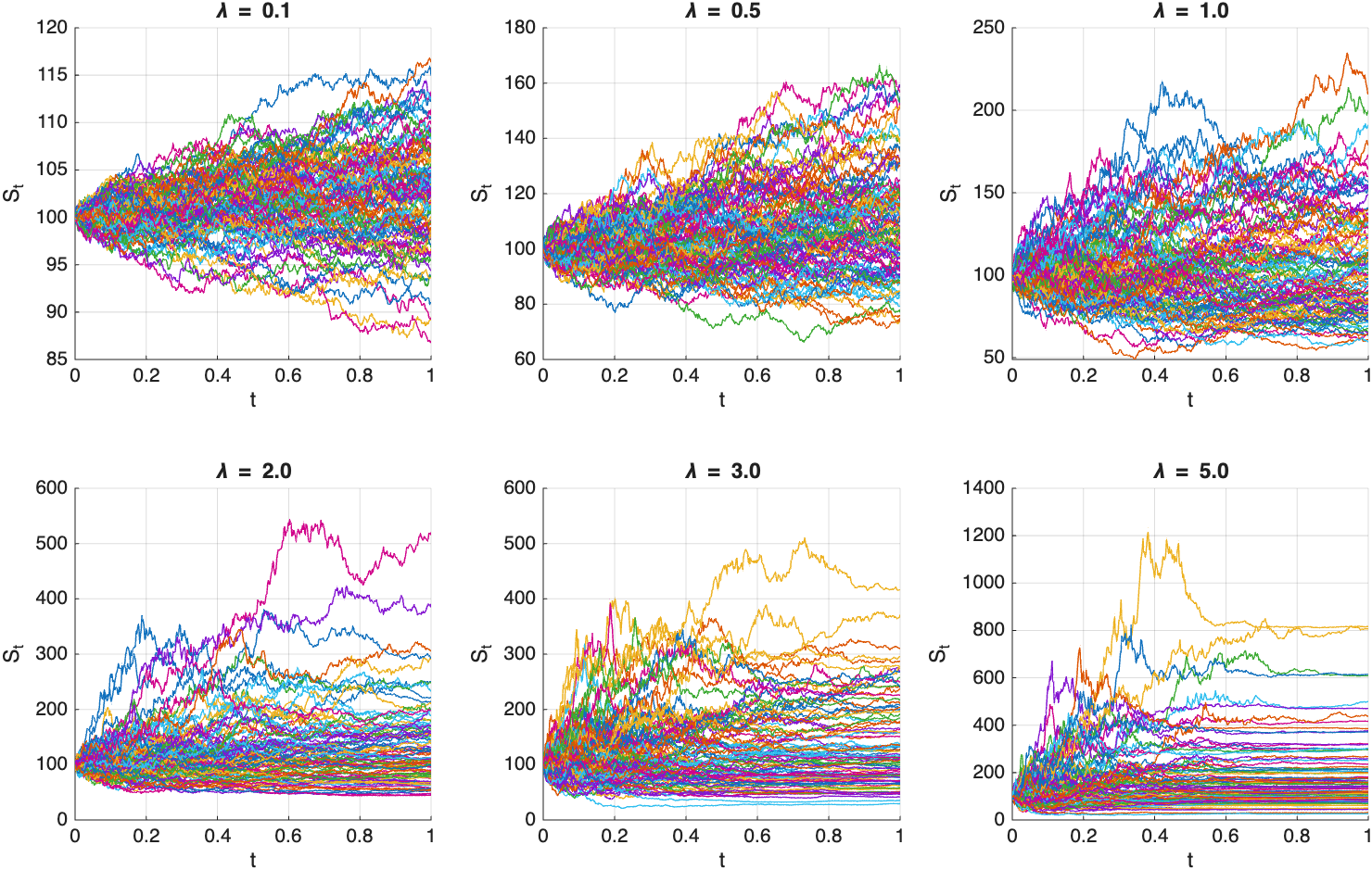}
\subcaption{Monte Carlo simulation}
\end{minipage}
    \caption{One hundred sample paths of the (codewords for the) process $\hat S$ corresponding to different values of $\lambda=0.1,\ 0.5,\ 1,\ 2,\ 3,\ 5$, with trajectories obtained via functional quantization (a) and those generated by Monte Carlo simulation (b). }
    \label{fig:Platen_impact_S_lambda}
\end{figure}

\subsubsection{Pricing of the Zero-Coupon Bond  in the Model of  \cite{platrendek18}}

We now consider the pricing of the zero-coupon bond. 
In Table \ref{TableZPB} we compare the price of a zero-coupon bond given by Monte Carlo (together with a confidence interval at $95\%$) with the price obtained by functional quantization denoted with $FQ$ for  different maturities $T=0.5,1,3,5,10,30,50$. 
Note that the functional quantization price is always (slightly) smaller that the Montecarlo ones: this is mostly likely due to the fact that for any stationary quantization $\hat S$ as the one we used, we have for every convex function $f$ that $\mathbb{E}[f(\hat S)]\leq \mathbb{E}[f( S)]$. In particular, it holds for $f(x)=1/x$, see e.g. \cite{zbMATH08132509}.

%


 \begin{table}[h!]
\begin{center}
\begin{tabular}{|c||c|c|c|c|}
\hline\hline
 T   &  FQ &     MC  &            $95\%$ C.I.         &        Discount Factor\\
 \hline\hline
    0.5 &   0.9830&   0.9849  &  [0.9845, 0.9853]     &       0.9851\\
\hline
      1  &  0.9673&    0.9699    &[0.9694, 0.9704]        &    0.9704\\ 
\hline
      3    &0.9096&   0.9135    &[0.9128, 0.9142]           &  0.9139\\  
\hline
      5    &0.8566&    0.8604   & [0.8596, 0.8612]         &  0.8607\\  
\hline
     10    &0.7373&     0.7405    &[0.7397, 0.7413]        &  0.7408\\ 
\hline
     20    &0.5466&    0.5487   & [0.5480, 0.5494]       &    0.5488\\  
\hline
     30    &0.4040&    0.4064    &[0.4058, 0.4070]      & 0.4065\\  
\hline
     50    &0.2200&    0.2231    &[0.2228, 0.2235]       &    0.2231\\ 
\hline\hline
 \end{tabular}
\end{center}
\caption{ Price of the Zero-Coupon bond for different maturities $T=0.5,1,3,5,10,30,50$ years given by  functional quantization (FQ) vs Monte Carlo simulation (MC) together with the corresponding $95\%$ confidence interval. On the last column we also display the discount factor, namely the risk neutral price of the ZCB. The parameters are as follows: 
 $\alpha=\beta= \sigma =1; \xi=0.05; \eta=0.002$ and $Y_0=0.1; S_0=100;r=3\%$. 
 For the functional quantization of the  Brownian motion we choose $d_N=20$  with $N_1=30$, $N_2 = 14$, $N_3=6$, $N_4 = 3$, $N_5=2$, corresponding to  $N=15,120$ trajectories. The size of  the Monte Carlo simulation is $10^5$. }
\label{TableZPB}\end{table}

Table \ref{TableZPB} shows that the functional quantization prices are in close agreement with their Monte Carlo counterparts for all maturities. Both estimates also coincide with the discount factor reported in the last column, which represents the risk-neutral price. 
We recall that deviations between benchmark prices under the real-world measure and risk-neutral prices may only occur if the Radon-Nikodym derivative associated with the change of measure becomes explosive, which does not seem to be the case for the parameter values considered in this numerical experiment.

To further confirm the necessity of introducing the correction term in the ODEs satisfied by the codewords, we report below several results on the sensitivity of the zero-coupon bond price with respect to the parameter $\lambda$. As shown in Table \ref{Table_lambda}, the price obtained via functional quantization (FQ)  is close to the Monte Carlo benchmark, whereas a fairly systematic bias emerges when the corrective term is omitted (the column ZCB FQ - $\hat S_{\textit{naive}}$). Moreover, the discrepancy becomes even more pronounced when both Stratonovich correction terms are removed, namely in the expression of $\hat S$  and in the ODEs governing the codewords $y_t$, as evidenced by the last column of the Table \ref{Table_lambda} (column ZCB FQ - $(y,\hat S)_{\textit{naive}}$). A similar effect can also be observed -- albeit to a lesser extent due to the lower price variability -- with respect to the other model parameters. We omit the corresponding results for the remaining parameters for the sake of brevity.

   

 \begin{table}[h!]
\begin{center}
\begin{tabular}{|c||c|c|c|c|c|}
\hline\hline
$\lambda$	 &  FQ &	 MC	&   CI $95\% $  &  FQ - $\hat S_{\textit{naive}}$ &	
  FQ - $(y, \hat S)_{\textit{naive}}$ \\
\hline \hline
 0.1  &        0.969  &         0.970&      (0.9708 , 0.9709)  &       0.969   &           0.969 \\       
   \hline
0.3  &        0.968 &           0.970&      (0.9700 , 0.9715)    &     0.968 &                 0.968\\        
   \hline
0.5  &        0.967  &         0.970&      (0.9696 , 0.9718)    &     0.966 &                 0.966\\       
   \hline
   1  &         0.967 &           0.971&      (0.9698 , 0.9735) &         0.958 &                   0.958\\        
\hline
 3  &        0.974  &        0.970 &      (0.9666 , 0.9735) &         0.914 &                  0.911\\        
     \hline
5     &     0.977 &          0.971 &      (0.9667 , 0.9758) &        0.877 &                   0.869\\        
      \hline
  8  &        0.963 &           0.966&       (0.9607 , 0.9724) &        0.829 &                  0.810\\ 
\hline\hline
 \end{tabular}
\end{center}
\caption{ Impact of the parameter $\lambda=0.1,0.3,0.5,1.3,5,8$  on the price of the Zero-Coupon Bond (ZCB) with maturity $T=1$ obtained with the product functional quantization (column  FQ), Monte Carlo simulation (column MC, together with the $95\%$ Confidence Interval), the naive quantization for the underlying, that is where the Stratonovich correction has been removed only from $\hat S$ (column  FQ - $\hat S_{\textit{naive}}$), and the completely  naive quantization, where the Stratonovich correction terms have been removed from both $\hat S$ and the ODEs \eqref{eq:codewordY} of the codewords $y$ (column FQ - $(y, \hat S)_{\textit{naive}}$). The other parameters are as follows: 
 $\alpha=\beta= \sigma =1; \xi=0.05; \eta=0.002$ and $Y_0=0.1; S_0=100;r=3\%$. 
 For the functional quantization of the  Brownian motion we choose $d_N=20$  with $N_1=30$, $N_2 = 14$, $N_3=6$, $N_4 = 3$, $N_5=2$, corresponding to  $N=15,120$ trajectories. The size of  the Monte Carlo simulation is $10^5$. }
\label{Table_lambda}\end{table}

In general, product functional quantization is not proposed to bypass the Monte Carlo method: instead, it should be thought of as a tool for variance reduction, by considering the possibility of using the quantization as a control variate. For example, in \cite{LehayReut2012}, a functional quantizer of Brownian motion is used as a control variate variable.
Ideas of this kind have also been discussed in \cite{corlaypages15} and \cite{LusPag2023}.

\subsubsection{Implied Volatility  in the Model of  \cite{platrendek18}}

In this subsection, we focus on the pricing of a call option on the GOP using the benchmark approach in the model of  \cite{platrendek18}. Using the general real-world pricing formula \eqref{benchmarkprice},  it follows that the price of a call on the GOP with maturity $T$ and strike $K$ is given by 
\begin{eqnarray}
Call_t(S_T)&=& S_t \mathbb E_t^{\mathbb{P}} \left[\frac{(S_T-K)^+}{S_T} \right].\label{callGOP}
\end{eqnarray}

In Figure~\ref{fig:imp_vol_FQ_MC_comparison_Platen}, we compare the implied volatility smiles obtained via functional quantization and Monte Carlo simulation across a range of maturities, from one month up to one year, for different values of the parameter $\lambda$. For moderate values of $\lambda$ (like the one used in this paper), the model is able to reproduce a volatility skew in line with the market, and the two methods produce consistent implied volatility smiles. As $\lambda$ increases, a visible discrepancy between the two approaches emerges; however, this difference is largely apparent rather than substantive and mainly due to numerical artifacts. Indeed, inspection of the vertical axis reveals that the smile shifts upward in level while becoming progressively flatter, so that the deviations between the two curves are in fact negligible. As recalled earlier, for sufficiently large values of $\lambda$ the model effectively reduces to a constant volatility Black-Scholes like framework, which is inherently unable to reproduce an implied volatility skew.

\begin{figure}[htbp]
    \centering
\includegraphics[width=1\linewidth]{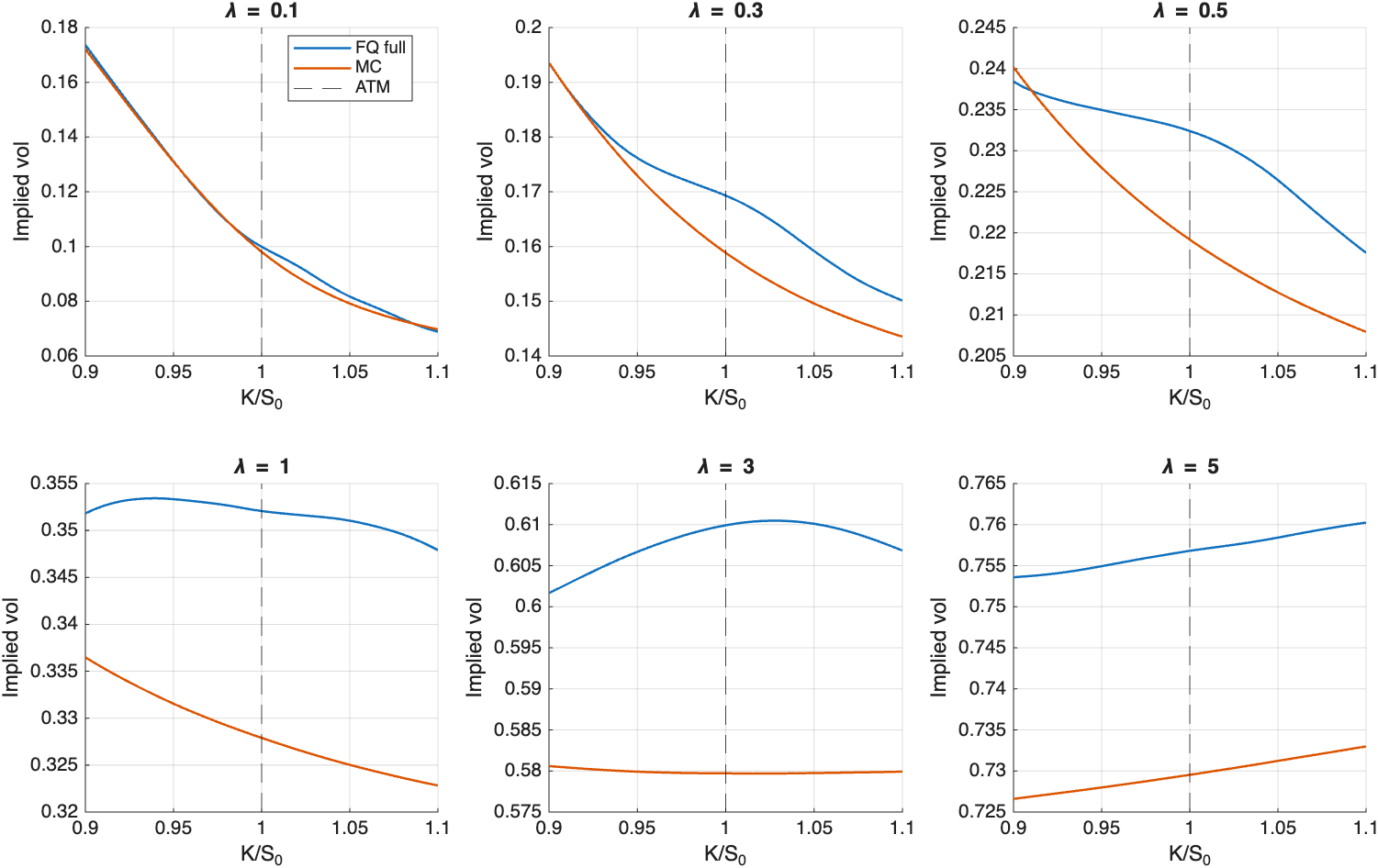} 
    \caption{Impact of the parameter $\lambda=0.1,0.3,0.5,1.3,5$  on the implied volatility smile with maturity $T=1$ obtained via functional quantization (FQ) and Monte Carlo simulation (MC).
    The other parameters are as follows: 
 $\alpha=\beta= \sigma =1; \xi=0.05; \eta=0.002$ and $Y_0=0.1; S_0=100;r=3\%$.}
    \label{fig:imp_vol_FQ_MC_comparison_Platen}
\end{figure}

Finally, Figure \ref{fig:imp_vol_FQ_FQnaive_comparison_Platen} illustrates the impact of the Stratonovich correction term on the implied volatility smile. For small values of $\lambda$, the smiles obtained under the different modeling assumptions are nearly indistinguishable, exhibiting substantial overlap. As $\lambda$ increases, however, the smile generated by a model that neglects the correction term in the dynamics of $S$ progressively deviates from that produced by the correctly implemented functional quantization scheme. A similar behavior is observed when both correction terms, in the dynamics of $S$ and of $y$, are omitted. Nonetheless, the same qualitative observation as before applies: as $\lambda$ grows, all implied volatility smiles become increasingly flat and tend to converge toward one another.

\begin{figure}[htbp]
    \centering
\includegraphics[width=1\linewidth]{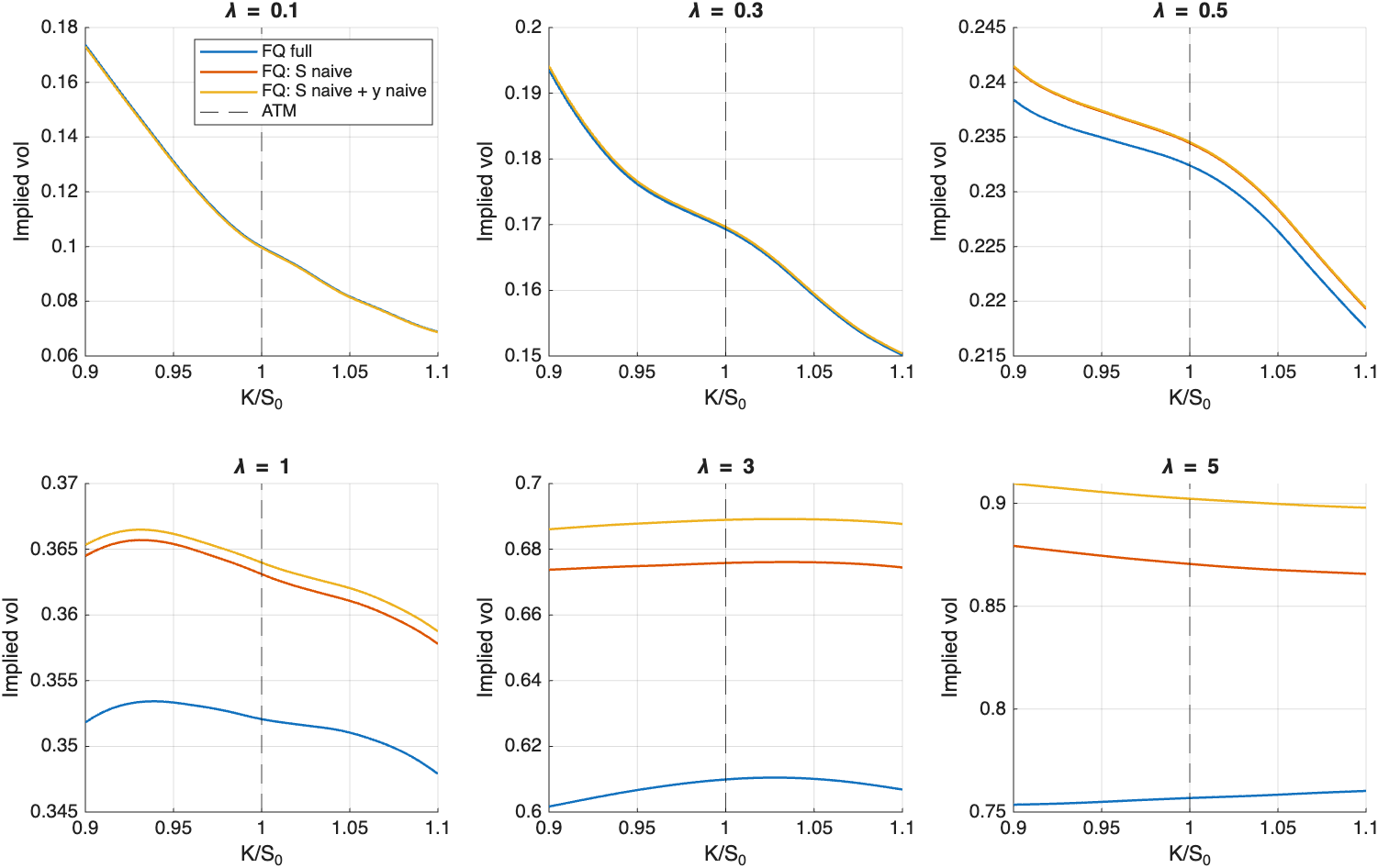} 
    \caption{Impact of the Stratonovich correction term as  $\lambda=0.1,0.3,0.5,1.3,5$ on the implied volatility smile at maturity $T=1$  generated with the functional quantization (FQ), the  and Monte Carlo simulation (MC).
    The other parameters are as follows: 
 $\alpha=\beta= \sigma =1; \xi=0.05; \eta=0.002$ and $Y_0=0.1; S_0=100;r=3\%$.Comparison of implied volatility smiles obtained via functional quantization (FQ), the naive  functional quantization FQ - $\hat S_{\textit{naive}}$ (FQ: S naive) and the completely  na\"ive  functional quantization FQ - $(y, \hat S)_{\textit{naive}}$ (FQ: S naive + y naive). The other parameters are as follows: 
 $\alpha=1;\beta=1; \sigma =0.1; \xi=0.05; \eta=0.002$ and $Y_0=0.1; S_0=100;r=3\%$. }
    \label{fig:imp_vol_FQ_FQnaive_comparison_Platen}
\end{figure}

\section{Conclusion}\label{section7}

In this paper, we have explored and extended several critical aspects of stochastic processes and their applications in financial modeling. 
We began by motivating the study of processes defined by Equation~\eqref{initialprocess}, demonstrating their relevance and applicability through illustrative examples. The introduction of product functional quantization, building upon the Karhunen-Lo\`{e}ve expansion for Brownian motion, proved to be a cornerstone of our approach. By extending the classical Lamperti transform method, we successfully addressed the challenges posed by the presence of memory terms in the diffusion coefficients, a key feature of the processes under investigation.
Our examination of the model proposed by \cite{GuyonVolMostlyPathDep2022} yielded valuable results, including a new approach to prove the existence and uniqueness of strong solutions for path-dependent SDEs. The application of the Lamperti transform in this context further illuminated the relationship between SDEs and ODEs in these complex systems.
The analysis of the \cite{platrendek18} model required additional care, due to the memory term in its diffusion coefficient. Our extended Lamperti transform proved instrumental in handling this complexity, demonstrating the versatility and power of our approach. The numerical illustrations,   provided for these models, offer practical insights into its behavior and potential applications.
Future research directions may include further generalizations of these processes, exploration of their applications in other domains beyond finance, and the development of more sophisticated numerical methods for their analysis, like e.g. the (multi-level) Romberg extrapolation, see \cite{pagesprintem05}. Additionally, the interplay between the theoretical foundations established here and empirical studies of financial markets could yield fruitful insights for both theory and practice.

\bibliography{biblio}
\bibliographystyle{apalike}
\end{document}